\documentclass[submission,copyright,creativecommons]{eptcs}
\providecommand{\event}{NCMA 2026} 

\usepackage[english]{babel}
\usepackage{iftex}
\usepackage{hyperref}
\usepackage{tikz,pgf}
\usetikzlibrary{positioning, arrows, automata}
\usepackage{comment}
\usepackage{amsmath, amsfonts, amssymb, theorem, enumerate}

\newcommand{\NIL}{\mathit{NIL}}
\newcommand{\COMB}{\mathit{COMB}}
\newcommand{\DEF}{\mathit{DEF}}
\newcommand{\SYDEF}{\mathit{SYDEF}}
\newcommand{\COMM}{\mathit{COMM}}
\newcommand{\CIRC}{\mathit{CIRC}}
\newcommand{\SUF}{\mathit{SCL}}
\newcommand{\PRE}{\mathit{PCL}}
\newcommand{\INF}{\mathit{ICL}}
\newcommand{\NC}{\mathit{NC}}
\newcommand{\SF}{\mathit{SF}}
\newcommand{\UF}{\mathit{UF}}
\newcommand{\FIN}{\mathit{FIN}}
\newcommand{\MON}{\mathit{MON}}
\newcommand{\PS}{\mathit{PS}}
\newcommand{\ORD}{\mathit{ORD}}
\newcommand{\STAR}{\mathit{STAR}}
\newcommand{\COM}{\mathit{COM}}
\newcommand{\LCOM}{\mathit{LCOM}}
\newcommand{\RCOM}{\mathit{RCOM}}
\newcommand{\TCOM}{\mathit{2COM}}
\newcommand{\INFF}{\mathit{IFR}}
\newcommand{\SUFF}{\mathit{SFR}}
\newcommand{\PREF}{\mathit{PFR}}
\newcommand{\REG}{\mathit{REG}}
\newcommand{\Comm}{\mathit{Comm}}
\newcommand{\Circ}{\mathit{Circ}}
\newcommand{\Suf}{\mathit{Suf}}
\newcommand{\Inf}{\mathit{Inf}}
\newcommand{\Pre}{\mathit{Pre}}
\newcommand{\ec}[1]{\ensuremath{\mathcal{EC}(#1)}}

\newcommand{\ellA}{\ell_{\mathrm{A}}}
\newcommand{\ellC}{\ell_{\mathrm{C}}}
\newcommand{\qmand}{\quad\mbox{and}\quad}

\newcommand{\iirule}[4]{
\begin{align*}
  S_1 &= #1, & S_2 &= #3,\\
  C_1 &= #2, & C_2 &= #4
\end{align*}
}

\def\Set#1#2{\left\{\: #1\;|\; #2\:\right\}}

\def\set#1#2{\{\; #1 \mid #2\;\}}
\def\sets#1{\{#1\}}

\def\cF{{\cal F}}

\def\cS{{\cal S}}

\def\cEC{{\cal EC}}
\def\cIC{{\cal IC}}

\DeclareSymbolFont{symbols}{OMS}{cmsy}{m}{n}

\def\Lra{\Longrightarrow}

\newtheorem{theorem}{Theorem}
\newtheorem{lemma}[theorem]{Lemma}
\newtheorem{corollary}[theorem]{Corollary}
\theorembodyfont{\rm}
\newtheorem{example}[theorem]{Example}
\newenvironment{proof}{{\em Proof. }}{{}\hspace*{\fill}$\Box$ \par \medskip }
\newenvironment{proof*}{{\em Proof. }}{\par \medskip }

\newlength{\btlabelwidth}
\newlength{\btleftmargin}
\newenvironment{btlists}{\begin{list}{{\rm--}}{%
\setlength{\labelwidth}{\btlabelwidth}\setlength{\leftmargin}{\btleftmargin}%
\setlength{\topsep}{0pt plus0.2ex}%
\setlength{\itemsep}{0ex plus0.2ex}%
\setlength{\parsep}{0pt plus0.2ex}}}{\end{list}}
\tikzstyle{to}=[->, >=stealth]
\tikzstyle{hier}=[->, >=angle 60]
\tikzstyle{hiero}=[->, >=angle 60, dashed]
\tikzstyle{state}=[circle,draw,inner sep=2pt,minimum size=8mm]
\tikzstyle{edgeLabel}=[inner sep=0.5mm,fill=white,text=black]

\usepackage{pdflscape}
\ifpdf
  \usepackage{underscore}     
  \usepackage[T1]{fontenc}    
\else
  \usepackage{breakurl}       
\fi

\title{Idefix-Free Languages and Their Application in External Contextual Grammars}
\author{Marvin K\"odding
\institute{Institut f{\"u}r Mathematik und Informatik, P{\"a}dagogische Hochschule Heidelberg\\Im Neuenheimer Feld 561, 69120 Heidelberg, Germany}
\email{koedding@ph-heidelberg.de}
\and
Bianca Truthe
\institute{Institut f\"ur Informatik, Universit\"at Giessen\\Arndtstr. 2, 35392 Giessen, Germany}
\email{bianca.truthe@informatik.uni-giessen.de}
}
\def\titlerunning{Idefix-Free Languages and Their Application in External Contextual Grammars}
\def\authorrunning{M. K\"odding \& B. Truthe}

\begin{document}
\maketitle

\begin{abstract}
In this paper, we continue the research on the power of contextual grammars with selection languages from 
subfamilies of the family of regular languages. 
We investigate infix-, prefix-, and suffix-free languages (referred to as idefix-free languages) and compare
such language families to some other subregular families of languages (finite, monoidal, nilpotent, combinational, 
(symmetric) definite, ordered, non-counting,  power-sepa\-rating, commutative, circular, union-free, star, and comet 
languages). Further, we compare the families of the hierarchies obtained for external 
contextual grammars with the language families defined by these new types for the selection. In this way, 
we extend the existing hierarchies by new language families.
\end{abstract}
Keywords: Idefix-free languages, external contextual grammars, subregular selection languages, computational capacity.


\section{Introduction}

Contextual grammars were introduced by Solomon Marcus in \cite{Marcus.1969} as a formal model for the generation of natural languages. The derivation steps consist in adding contexts to given well-formed sentences, starting from an initial finite basis. Formally, a context is given by a pair $(u, v)$ of words. The external adding to a word $x$ gives the word $uxv$; the internal adding to a word $x$ yields all words $x_1 u x_2 v x_3$ where $x_1x_2x_3 = x$. In order to control the derivation process, contextual grammars with selection in a certain family of languages were defined. In such contextual grammars, a context $(u, v)$ may be added only around a word $x$ if this word belongs to a language which is associated with the context.

The study of external contextual grammars with selection in special regular sets was started by J\"urgen Dassow in \cite{Dassow.2005} and continued by J\"urgen Dassow, Florin Manea, and Bianca Truthe (see \cite{Dassow_Manea_Truthe.2012}), where further subregular families of selection languages were considered. The internal derivation mode with subregular selection languages was investigated in \cite{DasManTru12b}.

In the present paper, we continue this line of research from \cite{Koedding.Truthe.NCMA24.JALC.2025} and \cite{Koedding.Truthe.2025.idefix}. We extend the hierarchy of subregular language families by families of infix-free, prefix-free, and suffix-free languages (referred to collectively as \emph{idefix-free} languages). 

The paper is organized as follows. In Section 2, we recall some basic notions and define the subregular language families as well as contextual grammars. In Section 3, we investigate the relations between the families of idefix-free languages and other known subregular language families. In Section 4, we consider external contextual grammars with idefix-free selection languages and compare their generative capacity to that of grammars with other subregular selection languages.

\section{Preliminaries}

Throughout the paper, we assume that the reader is familiar with the basic concepts of the theory of automata 
and formal languages. For details, we refer to \cite{Rozenberg_Salomaa.1997}. Here we only recall some notations, 
definitions, and previous results which we need for the present research.

An alphabet is a non-empty finite set of symbols. For an alphabet $V$, we denote by~$V^*$ and $V^+$ the set of all 
words and the set of all non-empty words over $V$, respectively. The empty word is denoted by~$\lambda$. 
For a word $w$ and a letter $a$, we denote the length of $w$ by $|w|$ and the number
of occurrences of the letter~$a$ in the word $w$ by $|w|_a$. For a set $A$, we denote its cardinality by $|A|$.

The reversal of a word $w$ is denoted by $w^R$: if $w=x_1x_2\ldots x_n$ for letters $x_1,\ldots,x_n$, 
then $w^R=x_nx_{n-1}\ldots x_1$. By~$L^R$, we denote the language of all reversals of the words in $L$: 
$L^R=\set{w^R}{w\in L}$.

A deterministic finite automaton is a quintuple 
\[{\cal A}=(V,Z,z_0,F,\delta)\]
where $V$ is a finite set of input symbols, $Z$
is a finite set of states, $z_0\in Z$ is the initial state, $F\subseteq Z$ is a set of accepting states, and $\delta$ is
a transition function $\delta: Z\times V\to Z$. The language accepted by such an automaton is the set of all input words 
over the alphabet $V$ which lead letterwise by the transition function from the initial state to an accepting state.

All the languages accepted by a finite automaton
are called regular and form a
family denoted by~$\REG$.
Any subfamily of this set is called a subregular language family.

For a language $L$ over an alphabet $V$, we set
\[\Comm(L) = \set{a_{i_1}\dots a_{i_n}}{a_1\dots a_n\in L, \ n\geq 1,\ \{i_1,i_2,\dots ,i_n\}= \{1,2,\dots ,n \}}\]
as the commutative closure (the set of all permutations of words) of the language~$L$ and
\[\Circ(L) = \set{ vu }{ uv\in L,\ u,v\in V^* }\]
as the circular closure (the set of all circular shifts of words) of the language $L$.

For a language $L$ over an alphabet $V$, we set
\begin{align*}
\Inf(L) &=\set{y}{xyz\in L \text{ for some } x,z\in V^*},\\
\Pre(L) &=\set{x}{xy\in L \text{ for some } y\in V^*},\\
\Suf(L) &=\set{y}{xy\in L \text{ for some } x\in V^*}
\end{align*}
as the infix-, prefix-, and suffix-closure of $L$, respectively.
If the language $L$ is regular, then also $\Inf(L)$, $\Pre(L)$, and $\Suf(L)$ are regular.

\subsection{Some Subregular Language Families}

We consider the following restrictions for regular languages. In the following list of properties, we give already
the abbreviation which denotes the family of all languages with the respective property.
Let $L$ be a regular language over an alphabet $V$. With respect to the alphabet $V$, the language $L$ is said to be
\begin{btlists}
	\item \emph{monoidal} ($\MON$) if and only if $L=V^*$,
	\item \emph{nilpotent} ($\NIL$) if and only if it is finite or its complement $V^*\setminus L$ is finite,
	\item \emph{combinational} ($\COMB$) if and only if it has the form
	$L=V^*X$
	for some subset~$X\subseteq V$,
	\item \emph{definite} ($\DEF$) if and only if it can be represented in the form
	$L=A\cup V^*B$
	where~$A$ and~$B$ are finite subsets of $V^*$,
	\item \emph{symmetric definite} ($\SYDEF$) if and only if $L = EV^*H$ for some regular languages $E$ and $H$,
	\item \emph{prefix-free} ($\PREF$) if and only if no word in $L$ is a proper prefix of another word in $L$; formally, for any $x \in L$ and $y \in V^*$, the relation $xy \in L$ implies $y = \lambda$,
	\item \emph{suffix-free} ($\SUFF$) if and only if no word in $L$ is a proper suffix of another word in $L$; formally, for any $y \in L$ and $x \in V^*$, the relation $xy \in L$ implies $x = \lambda$,
	\item \emph{infix-free} ($\INFF$) if and only if no word in $L$ is a proper infix of another word in $L$; formally, for any $y \in L$ and $x, z \in V^*$, the relation $xyz \in L$ implies $x = z = \lambda$,
	\item \emph{infix-closed} ($\INF$) if
	and only if, for any three words over $V$, say $x\in V^*$, $y\in V^*$ and $z \in V^*$, the relation $xyz\in L$ implies
	the relation~$y\in L$ (equivalently, $L=\Inf(L)$),
	\item \emph{prefix-closed} ($\PRE$) if
	and only if, for any two words over $V$, say $x\in V^*$ and $y \in V^*$, the relation $xy\in L$ implies
	the relation~$x\in L$ (equivalently, $L=\Pre(L)$),
	\item \emph{suffix-closed} ($\SUF$) if
	and only if, for any two words over $V$, say $x\in V^*$ and~$y\in V^*$, the relation $xy\in L$ implies
	the relation~$y\in L$ (equivalently, $L=\Suf(L)$),
	\item \emph{ordered} ($\ORD$) if and only if the language is accepted by some deterministic finite
	automaton ${\cal A}=(V,Z,z_0,F,\delta)$
	with an input alphabet $V$, a finite set $Z$ of states, a start state $z_0\in Z$, a set $F\subseteq Z$ of
	accepting states and a transition mapping $\delta$ where $(Z,\preceq )$ is a totally ordered set and, for
	any input symbol~$a\in V$, the relation $z\preceq z'$ implies~$\delta (z,a)\preceq \delta (z',a)$,
	\item \emph{commutative} ($\COMM$) if and only if it contains with each word also all permutations of this
	word (equivalently, $L=\Comm(L)$),
	\item \emph{circular} ($\CIRC$) if and only if it contains with each word also all circular shifts of this
	word (equivalently, $L=\Circ(L)$),
	\item \emph{non-counting} ($\NC$) if and only if there is a natural
	number $k\geq 1$ such that, for any three words~$x\in V^*$, $y\in V^*$, and $z\in V^*$, it
	holds~$xy^kz\in L$ if and only if~$xy^{k+1}z\in L$,
	\item \emph{star-free} ($\SF$) if and only if $L$ can be described by a regular expression which is built by
	concatenation, union, and complementation,
	\item \emph{power-separating} ($\PS$) if and only if, there is a natural number $m\geq 1$ such that
	for any word~$x\in V^*$, either $J_x^m \cap L = \emptyset$ or $J_x^m\subseteq L$
	where $J_x^m = \set{ x^n}{n\geq m}$,
	\item \emph{union-free} ($\UF$) if and only if $L$ can be described by a regular expression which
	is only built by concatenation and Kleene closure,
	\item \emph{star} ($\STAR$) if and only if $L = H^*$ for some regular language $H \subseteq V^*$,
	\item \emph{left-sided comet} ($\LCOM$) if and only if $L = EG^*$ for some regular language $E$ and a regular
	language $G \notin \{\emptyset, \{\lambda\}\}$,
	\item \emph{right-sided comet} ($\RCOM$) if and only if $L = G^*H$ for some regular language $H$ and a regular
	language $G \notin \{\emptyset, \{\lambda\}\}$,
	\item \emph{two-sided comet} ($\TCOM$) if and only if $L = EG^*H$ for two regular languages $E$ and $H$ and a
	regular language $G \notin \{\emptyset, \{\lambda\}\}$.
\end{btlists}

We remark that monoidal, nilpotent, combinational, (symmetric) definite, ordered, star-free, 
union-free, star, and (left-, right-, or two-sided) comet languages are regular, whereas non-regular languages 
of the other types mentioned above exist.
Here, we consider among the infix-closed, prefix-closed, suffix-closed, infix-free, prefix-free, suffix-free, 
commutative, circular, non-counting, 
and power-separating languages only those which are also regular.
By $\FIN$, we denote the family of languages with finitely many words.
In \cite{McNaughton_Papert.1971}, it was shown that the families of the regular non-counting languages and 
the star-free languages are equivalent~($\NC=\SF$).

Some properties of the languages of the classes mentioned above can be found in
\cite{Shyr.1991} (monoids),
\cite{Gecseg_Peak.1972} (nilpotent languages),
\cite{Havel.1969} (combinational and commutative languages),
\cite{Perles_Rabin_Shamir.1963} (definite languages),
\cite{Paz_Peleg.1965} (symmetric definite languages),
\cite{Brzozowski_Jiraskova_Zou.2014} (prefix-closed languages),
\cite{Gill_Kou.1974} and \cite{Brzozowski_Jiraskova_Zou.2014} (suffix-closed languages),
\cite{Shyr_Thierrin.1974.ord} (ordered languages),
\cite{Kudlek.2004} (circular languages),
\cite{McNaughton_Papert.1971} (non-counting and 
star free
languages),
\cite{Shyr_Thierrin.1974.ps} (power-separating languages),
\cite{Brzozowski.1962} (union-free languages),
\cite{Brzozowski.1967} (star languages),
\cite{Brzozowski_Cohen.1969} (comet languages).

\subsection{Contextual Grammars}

Let $\cF$ be a family of languages. A contextual grammar with selection in $\cF$ is a triple~$G=(V,\cS,A)$
where
\begin{btlists}
\item $V$ is an alphabet, 
\item $\cS$ is a finite set of selection pairs $(S,C)$ with a selection language $S$ over some subset $U$ of the
alphabet $V$ which belongs to the family $\cF$ with respect to the alphabet $U$ and a finite
set~\hbox{$C\subset V^*\times V^*$} of contexts where, for each context~$(u,v)\in C$, at least one side is not 
empty: $uv\not=\lambda$,
\item $A$ is a finite subset of $V^*$ (its elements are called axioms).
\end{btlists}
We write a selection pair $(S,C)$ also as $S\to C$. In the case that $C$ is a singleton set~$C=\{(u,v)\}$, we
also write $S\to(u,v)$.
For a contextual grammar 
$G=(V,\sets{S_1\to C_1, S_2\to C_2,\dots, S_n\to C_n},A)$,
we set
\[\ellA(G) = \max \Set{ |w| }{ w\in A },\
\ellC(G) = \max \Set{ |uv| }{(u,v)\in C_i, 1\leq i\leq n},\ 
\ell(G)   = \ellA(G)+\ellC(G)+1.\]
We now define the derivation modes for contextual grammars with selection.

Let $G=(V,\cS,A)$ be a contextual grammar with selection.
A direct external derivation step in $G$ is defined as follows: a word~$x$ derives a word $y$ 
(written as~$x\Lra_\mathrm{ex} y$) if and only if there is a pair~$(S,C)\in\cS$ such that~$x\in S$ and $y=uxv$ 
for some pair $(u,v)\in C$.
Intuitively, one can only wrap a context $(u,v)\in C$ around a word $x$ if $x$ belongs to the corresponding
selection language $S$.

A direct internal derivation step in $G$ is defined as follows: a word~$x$ derives a word~$y$ 
(written as~$x\Lra_\mathrm{in} y$) if and only if there are words $x_1$,~$x_2$,~$x_3$
with~$x_1x_2x_3=x$ and there is a selection pair~$(S,C)\in\cS$ such that $x_2\in S$ and~$y=x_1ux_2vx_3$ 
for some pair $(u,v)\in C$.
Intuitively, we can only wrap a context~$(u,v)\in C$ around a subword~$x_2$ of~$x$ if $x_2$ belongs to 
the corresponding selection language~$S$.
%

By $\Lra^*_\mu$ we denote the reflexive and transitive closure of the relation~$\Lra_\mu$ 
for~$\mu\in\sets{\mathrm{ex},\mathrm{in}}$. 
The language generated by $G$ is defined as
$L_\mu(G)=\set{ z }{ x\Lra^*_\mu z \mbox{ for some } x\in A }.$
We omit the index $\mu$ if the derivation mode is clear from the context.

By~$\cEC(\cF)$, we denote the family of all languages generated externally by contextual grammars 
with selection in $\cF$. When a contextual grammar works in the external mode, we call it an external 
contextual grammar. 

\section{Results on families of idefix-free languages
}

In this section, we investigate inclusion relations between various subregular language classes.
Figure~\ref{fig:lang_erg_1} shows the results. 

\begin{figure}[htb]
	\centering
	\scalebox{.9}{\begin{tikzpicture}[node distance=15mm and 16mm, on grid]
			\node (MON) {$\MON$};
			\node (d0)[above=of MON] {};
			\node (FIN)[right=of d0] {$\FIN$};
			\node (NIL)[above=of d0] {$\NIL$};
			\node (COMB)[left=of NIL] {$\COMB$};
			\node (DEF)[above=of NIL] {$\DEF$};
			\node (d1)[left=of DEF] {};
			\node (SYDEF)[left=of d1] {$\SYDEF$};
			\node (d2)[right=of NIL] {};
			\node (ORD)[above=of DEF] {$\ORD$};
			\node (INF)[right=of d2] {$\INF$};
			\node (PRE)[right=of ORD] {$\PRE$};
			\node (SUF)[right=of PRE] {$\SUF$};
			\node (NC)[above=of ORD] {$\NC\stackrel{\text{\cite{McNaughton_Papert.1971}}}{=}\SF$};
			\node (PS)[above=of NC] {$\PS$};
			\node (RCOM)[above=of SYDEF] {$\RCOM$};
			\node (LCOM)[left=of RCOM] {$\LCOM$};
			\node (TCOM)[above=of RCOM] {$\TCOM$};
			\node (INFF)[right=of INF] {\small$\INFF$};
			\node (PREF)[right=of SUF] {\small$\PREF$};
			\node (SUFF)[right=of PREF] {\small$\SUFF$};
			\node (CIRC)[right=of SUFF] {$\CIRC$};
			\node (COMM)[below=of CIRC] {$\COMM$};
			\node (UF)[left=of LCOM] {$\UF$};
			\node (STAR)[below=of UF] {$\STAR$};
			\node (REG) [above of = PS] {$\REG$};

			\draw[hier, bend left] (MON) to node[edgeLabel] {\small\cite{Koedding.Truthe.2024}} (STAR);
			\draw[hier, bend left] (MON) to node[edgeLabel] {\small\cite{Koedding.Truthe.2024}} (SYDEF);
			\draw[hier] (RCOM) to node[pos=.45,edgeLabel]{\small\cite{Bordihn_Holzer_Kutrib.2009}}(TCOM);
			\draw[hier] (LCOM) to node[edgeLabel]{\small\cite{Koedding.Truthe.2024}}(TCOM);
			\draw[hier, bend left=20] (TCOM) to node[edgeLabel]{\small\cite{Bordihn_Holzer_Kutrib.2009}}(REG);
			\draw[hier] (STAR) to node[pos=.45,edgeLabel]{\small\cite{Koedding.Truthe.2024}} (UF);
			\draw[hier, bend left] (UF) to node[edgeLabel]{\small\cite{Holzer_Truthe.2015}}(REG);
			\draw[hier, bend right=40] (MON) to node[pos=.7,edgeLabel]{\small\cite{Truthe.2018}} (COMM);
			\draw[hier] (COMM) to node[pos=.45,edgeLabel]{\small\cite{Holzer_Truthe.2015}}(CIRC);
			\draw[hier, bend right=30] (CIRC) to node[edgeLabel]{\small{\cite{Holzer_Truthe.2015}}}(REG);
			\draw[hier] (MON) to node[pos=.45,edgeLabel]{\small\cite{Truthe.2018}} (NIL);
			\draw[hier] (NIL) to node[pos=.45,edgeLabel]{\small\cite{Wiedemann.1978}}(DEF);
			\draw[hier] (DEF) to node[pos=.45,edgeLabel]{\small\cite{Holzer_Truthe.2015}}(ORD);
			\draw[hier] (ORD) to node[pos=.45,edgeLabel]{\small\cite{Shyr_Thierrin.1974.ord}}(NC);
			\draw[hier] (NC) to node[pos=.4,edgeLabel]{\small\cite{Shyr_Thierrin.1974.ps}}(PS);
			\draw[hier] (PS) to node[pos=.45,edgeLabel]{\small\cite{Holzer_Truthe.2015}}(REG);
			\draw[hier, bend right] (MON) to node[pos=.7,edgeLabel]{\small \cite{Koedding.Truthe.2025.idefix}}(INF);
			\draw[hier, bend right] (SUF) to node[edgeLabel]{\small\cite{Holzer_Truthe.2015}}(PS);
			\draw[hier] (FIN) to node[pos=.45,edgeLabel]{\small\cite{Wiedemann.1978}}(NIL);
			\draw[hier] (COMB) to node[pos=.45,edgeLabel]{\small\cite{Havel.1969}}(DEF);
			\draw[hier] (COMB) to node[pos=.45,edgeLabel]{\small\cite{Olejar_Szabari.2023}}(SYDEF);
			\draw[hier] (SYDEF) to node[pos=.45,edgeLabel]{\small\cite{Koedding.Truthe.2024}}(LCOM);
			\draw[hier] (SYDEF) to node[pos=.45,edgeLabel]{\small\cite{Olejar_Szabari.2023}}(RCOM);
			\draw[hier] (SYDEF) to node[edgeLabel]{\small\cite{Olejar_Szabari.2023}}(PS);
			
			\draw[hier, bend left] (INF) to node[edgeLabel]{\small \cite{Koedding.Truthe.2025.idefix}}(PRE);
			\draw[hier] (INF) to node[edgeLabel]{\small \cite{Koedding.Truthe.2025.idefix}}(SUF);
			\draw[hier, bend right] (PRE) to node[edgeLabel]{\small \cite{Koedding.Truthe.2025.idefix}}(PS);
			
			\draw[hier] (INFF) to node[edgeLabel]{\small Le. \ref{lemma:inff_subset_pref_suf}}(PREF);
			\draw[hier, bend right=10] (INFF) to node[pos=0.75, edgeLabel]{\small Le. \ref{lemma:inff_subset_pref_suf}}(SUFF);
			\draw[hier, bend right =25] (PREF) to node[edgeLabel]{\small Le. \ref{lemma:pref_suff_subset_ps}}(PS);
			\draw[hier, bend right] (SUFF) to node[edgeLabel]{\small Le. \ref{lemma:pref_suff_subset_ps}}(PS);
	\end{tikzpicture}}
	\caption{Resulting hierarchy of subregular language families.}
	\label{fig:lang_erg_1}
\end{figure}

An arrow from a node $X$ to a node~$Y$ stands for the proper inclusion $X \subset Y$. 
  If two families are not connected by a directed path, then they are incomparable. 
  An edge label refers to the paper where the proper inclusion has been shown (in some cases, it might be that it is not the 
  first paper where the respective inclusion has been mentioned, since it is so obvious that it was not emphasized in a 
  publication) or the lemma of this paper where the proper inclusion will be shown.

  In the literature, it is often said that two languages are equivalent if they are equal or differ
  at most in the empty word. Similarly, two families can be regarded to be equivalent if they differ 
  only in the languages $\emptyset$ or $\{\lambda\}$. Therefore, the set $\STAR$ of all star languages is 
  sometimes regarded as a proper subset of the set $\COM$ of all (left-, right-, or two-sided) comet languages 
  although $\{\lambda\}$ belongs to the family $\STAR$ but not to $\LCOM$, $\RCOM$ 
  or $\TCOM$. We regard $\STAR$ and $\STAR\setminus\{\{\lambda\}\}$ as different.
  
  We now present some languages which will serve later as witness languages for proper inclusions or
  incomparabilities. 

\newcommand{\prefsetminussuff}{\ensuremath{\{b,ab\}}}
\begin{lemma}\label{lemma:pref_setminus_suff}
	Let $L = \{b, ab\}$. Then, it holds $L \in (\PREF \cap \FIN) \setminus \SUFF$.
\end{lemma}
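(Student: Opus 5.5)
The statement splits into three membership claims, each checked directly from the definitions in the preliminaries, with the alphabet $V=\{a,b\}$.

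\emph{Finiteness.} The language $L=\prefsetminussuff$ has exactly two words, so $L\in\FIN$ immediately.

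\emph{Prefix-freeness.} We have to show that no word of $L$ is a proper prefix of another word of $L$. Take $x\in L$ and $y\in V^*$ with $xy\in L$, and distinguish the two choices of $x$.
\begin{itemize}
\item If $x=b$, then $xy$ begins with $b$. The only word of $L$ beginning with $b$ is $b$ itself, because $ab$ begins with $a$. Hence $xy=b$ and $y=\lambda$.
\item If $x=ab$, then $|xy|\geq 2$. The only word of $L$ of length at least $2$ is $ab$. Hence $xy=ab$ and again $y=\lambda$.
\end{itemize}
So $L\in\PREF$, and together with finiteness, $L\in\PREF\cap\FIN$.

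\emph{Not suffix-free.} It suffices to exhibit a violation. Choose the words $b\in L$ and $x=a\neq\lambda$. Then $xb=ab\in L$, so $b$ is a proper suffix of the word $ab$ of $L$. This contradicts the defining condition of $\SUFF$, hence $L\notin\SUFF$.

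There is no genuine obstacle here. The only step that needs any care is the case analysis for prefix-freeness, and it reduces to comparing first letters and lengths.
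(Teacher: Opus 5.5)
Your proof is correct and takes the same route as the paper. Finiteness follows because $L$ has two words, prefix-freeness is checked directly on the two words (your case analysis on first letters and lengths spells this out), and suffix-freeness fails because $b$ is a proper suffix of $ab$.
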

\begin{proof}
	The language $L$ contains exactly two words and is therefore finite, yielding $L \in \FIN$. 
	The word $b$ is not a proper prefix of $ab$, and $ab$ is not a prefix of $b$. Since no word in $L$ is a proper prefix of another word in $L$, it follows that $L \in \PREF$. 
	However, since $b \in L$ is a proper suffix of $ab \in L$, the language $L$ is not suffix-free. Therefore, $L \notin \SUFF$, and the assertion holds.
\end{proof}

\newcommand{\suffsetminuspref}{\ensuremath{\{a,ab\}}}
\begin{lemma}\label{lemma:suff_setminus_pref}
	Let $L = \{a,ab\}$. Then, it holds $L \in (\SUFF \cap \FIN) \setminus \PREF$.
\end{lemma}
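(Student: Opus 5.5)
The plan mirrors the proof of Lemma~\ref{lemma:pref_setminus_suff} with the roles of prefixes and suffixes exchanged. Membership in $\FIN$, suffix-freeness and non-prefix-freeness will each be checked directly from the definitions.

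First, $L=\{a,ab\}$ consists of exactly two words, so $L\in\FIN$. Since every finite language is regular, $L$ is one of the regular languages considered for the families $\SUFF$ and $\PREF$.

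Second, I will show $L\in\SUFF$ by checking all pairs of distinct words. A proper suffix of $ab$ is either $b$ or $\lambda$, and neither is in $L$, so $a$ is not a proper suffix of $ab$. The word $ab$ is longer than $a$ and so cannot be a suffix of it. Formally, suppose $y\in L$ and $xy\in L$. If $y=ab$, length forces $x=\lambda$. If $y=a$, then $xa\in\{a,ab\}$; since $ab$ does not end in $a$, we get $xa=a$, hence $x=\lambda$.

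Third, $a\in L$ is a proper prefix of $ab\in L$: take $x=a$ and $y=b\neq\lambda$ with $xy\in L$. This violates the definition of prefix-freeness, so $L\notin\PREF$.

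Combining the three steps gives $L\in(\SUFF\cap\FIN)\setminus\PREF$. No step is a real obstacle. The only point needing care is the case analysis for suffix-freeness, where all pairs (including equal words) must be covered.
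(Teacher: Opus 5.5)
Your proposal is correct and follows the paper's proof essentially verbatim: finiteness from having two words, suffix-freeness by checking that neither word is a proper suffix of the other, and failure of prefix-freeness because $a$ is a proper prefix of $ab$. Your formal case analysis for suffix-freeness is a slightly more explicit version of the same check.
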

\begin{proof}
	The language $L$ is finite since it contains exactly two words, hence $L \in \FIN$. 
	The word $a$ is not a proper suffix of $ab$, and $ab$ is not a suffix of $a$. Since no word in $L$ is a proper suffix of another word in $L$, it holds $L \in \SUFF$. 
	Furthermore, since $a \in L$ is a proper prefix of $ab \in L$, the language $L$ is not prefix-free. Hence, $L \notin \PREF$, which yields the assertion.
\end{proof}
\pagebreak

\newcommand{\monsetminusprefcupsuff}{\ensuremath{\{a\}^*}}
\begin{lemma}\label{lemma:mon_setminus_pref_cup_suff}
	Let $L = \{a\}^*$. Then, it holds $L \in \MON \setminus (\PREF \cup \SUFF)$.
\end{lemma}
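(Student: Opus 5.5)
The plan is to verify the two parts of the statement directly from the definitions, in the same style as Lemmas~\ref{lemma:pref_setminus_suff} and~\ref{lemma:suff_setminus_pref}.

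\textbf{Membership in $\MON$.} We regard $L=\{a\}^*$ as a language over the alphabet $V=\{a\}$. Then $L=V^*$, so $L\in\MON$ by the definition of monoidal languages.

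\textbf{Non-membership in $\PREF$ and $\SUFF$.} It suffices to exhibit one pair of words of $L$ in which one is a proper prefix and also a proper suffix of the other. The words $\lambda$ and $a$ both lie in $L$. Write $a=\lambda\cdot a$ with $x=\lambda\in L$ and $y=a\neq\lambda$; this shows that $\lambda$ is a proper prefix of $a$, so $L\notin\PREF$. Writing $a=a\cdot\lambda$ shows symmetrically that $\lambda$ is a proper suffix of $a$, so $L\notin\SUFF$. If one prefers to avoid the empty word, the pair $a$ and $aa$ works equally well for both properties. Together these give $L\notin\PREF\cup\SUFF$, and hence $L\in\MON\setminus(\PREF\cup\SUFF)$.

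There is no real obstacle here. The only point needing care is to state explicitly that $\MON$-membership is meant with respect to the alphabet $\{a\}$, as the definitions require.
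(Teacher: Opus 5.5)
Your proof is correct and matches the paper's: $L=V^*$ over $V=\{a\}$ gives $L\in\MON$, and one pair of words in $L$, where one is both a proper prefix and a proper suffix of the other, rules out $\PREF$ and $\SUFF$. The paper uses the pair $a$ and $aa$, your fallback; your primary pair $\lambda$ and $a$ also works under the paper's formal definitions.
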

\begin{proof}
	With the alphabet $V = \{a\}$, the language $L$ can be expressed as $V^*$. Therefore, $L \in \MON$. 
	The word $a \in L$ is both a proper prefix and a proper suffix of the word $aa \in L$. Thus, the language $L$ is neither prefix-free nor suffix-free, which implies $L \notin \PREF \cup \SUFF$.
\end{proof}

\newcommand{\inffsetminusuf}{\ensuremath{\{ab, ba\}}}
\begin{lemma}\label{lemma:inff_setminus_uf}
	Let $L = \inffsetminusuf$. Then, it holds $L \in \INFF \setminus \UF$.
\end{lemma}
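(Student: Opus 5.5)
The statement splits into two parts: $L=\{ab,ba\}$ is infix-free, and $L$ is not union-free.

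\emph{Infix-freeness.} Both words have length two. A proper infix of a word of length two has length at most one, so neither word can be a proper infix of the other. Hence $L\in\INFF$.

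\emph{Not union-free.} Here I would use the structure of union-free expressions. Every regular expression built only from concatenation and Kleene closure (over letters, $\lambda$ and possibly $\emptyset$) describes either $\emptyset$ or a language $L(E)$ with the following property: the language contains a unique shortest word $s_E$, and every word of $L(E)$ contains $s_E$ as a scattered subword. I would prove this by structural induction on $E$.
\begin{itemize}
\item For a letter $a$ or for $\lambda$ the claim is immediate.
\item For $E=F^*$, the unique shortest word is $\lambda$, which is a scattered subword of every word.
\item For $E=FG$ with both factors nonempty, the shortest words of $L(FG)$ are exactly the products $s_F s_G$, so $s_E=s_Fs_G$ is unique. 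Any word $xy$ with $x\in L(F)$ and $y\in L(G)$ contains $s_F$ scattered in $x$ and $s_G$ scattered in $y$, hence contains $s_Fs_G$ as a scattered subword.
\item If either factor is empty, then $L(FG)=\emptyset$.
\end{itemize}
The family $\UF$ considers only such expressions, so this covers all cases.

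To conclude, I would apply the property to $\{ab,ba\}$. This language has two distinct shortest words, $ab$ and $ba$, so it violates uniqueness of the shortest word. Therefore $L\notin\UF$.

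The main obstacle is making the induction's uniqueness claim rigorous in the concatenation case. There I would argue that minimal length forces each factor to be of minimal length: if $|xy|=|s_F|+|s_G|$ with $|x|\ge|s_F|$ and $|y|\ge|s_G|$, then equality holds in both. So $x=s_F$ and $y=s_G$ by the induction hypothesis, and the shortest word of $L(FG)$ is exactly $s_Fs_G$.
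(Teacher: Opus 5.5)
Your proof is correct, and your infix-freeness argument is the paper's (two distinct words of equal length). For $L\notin\UF$ the routes differ.

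The paper does not analyse expressions. It cites Nagy's result that every union-free language is either infinite or contains at most one word, and observes that $L$ is finite with two words. You instead prove a self-contained invariant by structural induction: every nonempty union-free language has a unique shortest word. Your concatenation case is exactly what is needed: minimal total length forces each factor to have minimal length, hence to equal $s_F$ and $s_G$. The letter, $\lambda$, $\emptyset$ and star cases are also fine.

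The two invariants have different reach. The cited cardinality result also excludes finite languages such as $\{a,ab\}$, which your criterion does not catch. Your criterion, on the other hand, also rules out infinite languages with several shortest words, such as $\{ab,ba\}\cup\set{a^n}{n\geq 3}$, which the cardinality result cannot. For this lemma both suffice, because $ab$ and $ba$ have the same length. Your version has the advantage of not depending on an external reference.

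The scattered-subword half of your invariant is never used and can be dropped.
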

\begin{proof}
	Since the words $ab$ and $ba$ have the same length and are not equal, neither can be a proper infix of the other. Thus, $L \in \INFF$. 
	According to \cite{Nagy.2019}, a union-free language is either infinite or contains at most one word. Since $L$ contains exactly two words, it is not union-free. Therefore, $L \notin \UF$, which yields the assertion.
\end{proof}

\newcommand{\inffsetminustwocompresuf}{\ensuremath{\{ab\}}}
\begin{lemma}\label{lemma:inff_setminus_twocom_pre_suf}
	Let $L = \inffsetminustwocompresuf$. Then, it holds $L \in \INFF \setminus (\TCOM \cup \PRE \cup \SUF \cup \CIRC)$.
\end{lemma}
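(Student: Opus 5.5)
The plan is to check each of the five memberships separately, working directly from the definitions.

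Membership in $\INFF$ is immediate. A singleton language contains no two distinct words, so no word of $L$ can be a proper infix of another word of $L$.

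For $\PRE$, $\SUF$ and $\CIRC$ we use closure properties and exhibit one missing word in each case:
\begin{itemize}
\item Prefix-closed: $ab=ab\cdot\lambda\in L$, so $a\in\Pre(L)$, but $a\notin L$. Hence $L\neq\Pre(L)$ and $L\notin\PRE$.
\item Suffix-closed: $ab=a\cdot b$, so $b\in\Suf(L)$, but $b\notin L$. Hence $L\notin\SUF$.
\item Circular: taking $u=a$ and $v=b$ in the definition of $\Circ$ gives $ba\in\Circ(L)\setminus L$. Hence $L\notin\CIRC$.
\end{itemize}
These choices work with respect to any alphabet $U\supseteq\{a,b\}$, so no alphabet subtleties arise.

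For $\TCOM$, assume for contradiction that $L=EG^*H$ with regular languages $E$, $H$ and a regular $G\notin\{\emptyset,\{\lambda\}\}$. Since $L\neq\emptyset$, the languages $E$ and $H$ are nonempty; fix $e\in E$ and $h\in H$. Since $G\neq\emptyset$ and $G\neq\{\lambda\}$, there is a word $g\in G$ with $g\neq\lambda$. Then $eg^nh\in L$ for every $n\geq 0$, and these words have pairwise distinct lengths. So $L$ would be infinite, contradicting $|L|=1$.

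The only point needing care is this last step: the argument must rule out $G$ containing only $\lambda$, and that is exactly what the hypothesis $G\notin\{\emptyset,\{\lambda\}\}$ provides. Everything else is routine verification.
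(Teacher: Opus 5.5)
Your proof is correct and follows the paper's route: a singleton is infix-free, a missing prefix, suffix or circular shift rules out $\PRE$, $\SUF$ and $\CIRC$, and a nonempty two-sided comet language must be infinite. The differences are minor. You use $a$ and $b$ as the missing prefix and suffix, where the paper uses $\lambda$. You also prove that two-sided comet languages are empty or infinite directly via the words $eg^nh$, whereas the paper cites this fact from the literature.
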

\begin{proof}
	The language $L$ consists of exactly one word and is therefore infix-free, yielding $L \in \INFF$. 
	According to \cite{Koedding.Truthe.2024}, every two-sided comet language is either empty or infinite. Since $L$ contains exactly one word, $L \notin \TCOM$. 
	Furthermore, $L$ is neither prefix-closed, suffix-closed nor circular because the word $ab$ is in $L$, but its proper prefix and proper suffix $\lambda$ and its circular permutation $ba$ is not. 
	Consequently, $L \notin \PRE \cup \SUF \cup \CIRC$.
\end{proof}

\newcommand{\inffsetminusnc}{\ensuremath{\{a(bb)^n a \mid n \ge 0\}}}
\begin{lemma}\label{lemma:inff_setminus_nc}
	Let $L = \inffsetminusnc$. Then, it holds $L \in \INFF \setminus \NC$.
\end{lemma}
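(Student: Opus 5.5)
The statement has two parts: $L=\inffsetminusnc$ is infix-free, and $L$ is not non-counting. I will prove them separately, directly from the definitions.

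\emph{Infix-freeness.} Every word of $L$ has the form $a(bb)^na$. It therefore contains exactly two occurrences of the letter $a$, one at the first and one at the last position. Suppose $y=a(bb)^ma$ is an infix of $w=a(bb)^na$, so $w=xyz$. Then $y$ contributes two occurrences of $a$ to $w$, and $w$ has only two. Hence $x$ and $z$ contain no $a$. Since $w$ begins and ends with $a$, a non-empty $x$ would start with $a$, and a non-empty $z$ would end with $a$. So $x=z=\lambda$, and no word of $L$ is a proper infix of another. Regularity of $L$ is immediate from the expression $a(bb)^*a$, so $L\in\INFF$.

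\emph{Not non-counting.} Assume for contradiction that $L\in\NC$ with some constant $k\geq 1$. Apply the defining condition with $x=a$, $y=b$, $z=a$: then $ab^ka\in L$ holds if and only if $ab^{k+1}a\in L$ holds. But $ab^ja\in L$ exactly when $j$ is even, and exactly one of $k$, $k+1$ is even. This contradicts the equivalence, so $L\notin\NC$.

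Neither step is a real obstacle. The only point needing care is the occurrence count of $a$ in the infix argument, together with the observation that words of $L$ start and end with $a$. Alternatively, one could invoke $\NC=\SF$ and aperiodicity of the syntactic monoid, but the direct choice of $y=b$ is simpler.
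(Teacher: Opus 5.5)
Your proposal is correct and follows essentially the same route as the paper: infix-freeness from the fact that each word has exactly two occurrences of $a$, at its first and last positions, and failure of the non-counting property via $x=z=a$, $y=b$ and the parity of $k$. Your version is slightly more explicit than the paper's, since you spell out why $x$ and $z$ must be empty and note that $a(bb)^*a$ is regular.
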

\begin{proof}
	Every word in $L$ contains exactly two occurrences of the letter $a$, located at the very first and last positions. Because of this, no word in $L$ can be a proper infix of another word in $L$. Thus, $L \in \INFF$.
	
	Assuming that $L$ is non-counting, it follows from the definition that for all words $x, y, z \in \{a,b\}^*$ 
	and for a number $k \ge 1$ the equivalence $x y^k z \in L \iff x y^{k+1} z \in L$ applies. 
	We now set $x = z = a$ and $y = b$. If~$k$ is even, $a b^k a \in L$ but $a b^{k+1} a \notin L$, 
	which is a contradiction. If~$k$ is odd, $a b^{k+1} a \in L$ but $a b^k a \notin L$, which is also 
	a contradiction. It follows that $L \notin \NC$.
\end{proof}

\newcommand{\combsetminusprefcupsuff}{\ensuremath{\{a, b\}^*\{a\}}}
\begin{lemma}\label{lemma:comb_setminus_pref_cup_suff}
	Let $V = \{a, b\}$ and $L = \combsetminusprefcupsuff$. Then, it holds $L \in \COMB \setminus (\PREF \cup \SUFF)$.
\end{lemma}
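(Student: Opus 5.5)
The plan is to verify the two claims separately, in the same style as the preceding lemmas.

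For membership in $\COMB$, note that $L=V^*X$ with $V=\{a,b\}$ and $X=\{a\}\subseteq V$. This is exactly the defining form of combinational languages, so $L\in\COMB$.

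For $L\notin\PREF$, we exhibit two words of $L$ where one is a proper prefix of the other. The words $a$ and $aa$ both lie in $L$, since each ends with $a$. Writing $aa=a\cdot a$ with the nonempty word $y=a$, the word $a$ is a proper prefix of $aa$. This violates prefix-freeness.

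The same pair also handles suffix-freeness. Since $aa=a\cdot a$, the word $a$ is a proper suffix of $aa$, so $L\notin\SUFF$. Alternatively, we can observe that $L$ is closed under adding arbitrary prefixes: $xa\in L$ for every $x\in V^*$. Hence any word of $L$ is a proper suffix of, for instance, $bw\in L$.

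Combining the three facts gives $L\in\COMB\setminus(\PREF\cup\SUFF)$. No step is a real obstacle. The only care needed is to pick witnesses that really lie in $L$, namely words ending in $a$, and to make sure the extending part is nonempty so that the prefix or suffix is proper.
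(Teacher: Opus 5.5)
Your proposal is correct and follows the paper's proof: you write $L = V^*X$ with $X=\{a\}$ to obtain $L\in\COMB$, and you use the pair $a, aa\in L$ to show that $a$ is both a proper prefix and a proper suffix of $aa$, so $L\notin\PREF\cup\SUFF$. Your alternative remark that $bw\in L$ for every $w\in L$ is also valid, but it is not needed.
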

\begin{proof}
	By setting $X = \{a\} \subseteq V$, the language can be written as $L = V^*X$. Therefore, $L \in \COMB$. 
	The word $a \in L$ is a proper prefix and a proper suffix of the word $aa \in L$, which implies that $L$ is neither prefix-free nor suffix-free. Consequently, $L \notin \PREF \cup \SUFF$, and the assertion holds.
\end{proof}

We now prove some inclusion relations.
\begin{lemma}\label{lemma:inff_subset_pref_suf}
	The proper inclusions $\INFF \subset \PREF$ and $\INFF \subset \SUFF$ hold.
\end{lemma}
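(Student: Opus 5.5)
The proof splits into the two inclusions and then the two properness claims, each handled directly from the definitions together with the witness languages from the preceding lemmas.

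For the inclusion $\INFF \subseteq \PREF$, let $L$ be a regular infix-free language over $V$, and take $x \in L$ and $y \in V^*$ with $xy \in L$. Writing $xy = \lambda \cdot x \cdot y$ exhibits $x$ as an infix of the word $xy \in L$. Infix-freeness, applied with left factor $\lambda$ and right factor $y$, forces $y = \lambda$, so $L$ is prefix-free.

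For $\INFF \subseteq \SUFF$, take $y \in L$ and $x \in V^*$ with $xy \in L$. Writing $xy = x \cdot y \cdot \lambda$ shows $y$ is an infix of a word in $L$, so $x = \lambda$. Regularity is preserved trivially, since both families are restricted to regular languages and $L$ is the same language.

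For properness, I would reuse the witnesses already established. By Lemma~\ref{lemma:pref_setminus_suff}, the language $\prefsetminussuff$ lies in $\PREF \setminus \SUFF$. Since $\INFF \subseteq \SUFF$ has just been shown, this language is not in $\INFF$, so $\PREF \setminus \INFF \neq \emptyset$. Symmetrically, Lemma~\ref{lemma:suff_setminus_pref} gives $\suffsetminuspref \in \SUFF \setminus \PREF$, and hence $\suffsetminuspref \notin \INFF$ by the first inclusion. One could also check these directly: in $\{b,ab\}$, the word $b$ is a proper infix of $ab$, and in $\{a,ab\}$, the word $a$ is.

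There is no real obstacle. The only points needing care are matching the quantifier structure of the definitions, with the empty word in the right slot, and making sure each strictness argument uses the correct one of the two inclusions.
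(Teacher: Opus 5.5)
Your proposal is correct and follows essentially the same route as the paper. The inclusions come from the fact that prefixes and suffixes are infixes, which you make explicit by putting $\lambda$ in the appropriate slot of the infix-free definition. Properness uses the witnesses $\{b,ab\}$ and $\{a,ab\}$ from Lemmas~\ref{lemma:pref_setminus_suff} and~\ref{lemma:suff_setminus_pref}, each excluded from $\INFF$ via the opposite inclusion, exactly as in the paper.
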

\begin{proof}
	The inclusions $\INFF \subseteq \PREF$ and $\INFF \subseteq \SUFF$ hold because every prefix and every suffix of a word is also an infix of that word. Thus, any infix-free language is necessarily prefix-free and suffix-free.
	
	The language $L_1 = \prefsetminussuff$ from Lemma \ref{lemma:pref_setminus_suff} is a witness language for the properness of the first inclusion. Since $L_1 \in \PREF \setminus \SUFF$ and $\INFF \subseteq \SUFF$, it follows that $L_1 \in \PREF \setminus \INFF$.
	
	Similarly, the language $L_2 = \suffsetminuspref$ from Lemma \ref{lemma:suff_setminus_pref} is a witness language for the properness of the second inclusion. Since $L_2 \in \SUFF \setminus \PREF$ and $\INFF \subseteq \PREF$, it follows that $L_2 \in \SUFF \setminus \INFF$.
\end{proof}
\pagebreak

\begin{lemma}\label{lemma:pref_suff_subset_ps}
	The proper inclusions $\PREF \subset \PS$ and $\SUFF \subset \PS$ hold.
\end{lemma}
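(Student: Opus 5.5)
The plan is to prove the inclusion $\PREF \subseteq \PS$ directly from the definitions. The inclusion for $\SUFF$ then follows by the symmetric argument, since $x^n$ is both a prefix and a suffix of $x^{n'}$ whenever $n\le n'$. Properness comes from the witness languages already given.

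For the inclusion, let $L$ be a regular prefix-free language over $V$ and take $m=2$. Fix a word $x\in V^*$. First consider $x\neq\lambda$. Suppose $J_x^2\cap L\neq\emptyset$, say $x^n\in L$ with $n\ge 2$. For every $n'>n$, the word $x^n$ is a proper prefix of $x^{n'}$, so prefix-freeness forces $x^{n'}\notin L$. Likewise, for every $n'$ with $2\le n'<n$, the word $x^{n'}$ is a proper prefix of $x^n\in L$, so $x^{n'}\notin L$. Hence $J_x^2\cap L=\{x^n\}$, which is neither empty nor all of the infinite set $J_x^2$. So the choice $m=2$ fails, and indeed any fixed $m$ fails.

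The fix is to observe that the definition of $\PS$ only asks for some $m$ per language, not uniformly in $x$. The right choice is $m=1+\max$ over all $x$ of the exponent $n$ with $x^n\in L$, but this maximum is unbounded in general: the language $\{a^nb\}$-type examples have unbounded length. So I instead reconsider. If $x^n\in L$ with $x\ne\lambda$, then $x^{n+1}\in J_x^{m}$ for every $m\le n+1$ and $x^{n+1}\notin L$. Thus we need $J_x^m\cap L=\emptyset$, that is, $m>n$. By the pumping lemma applied to a DFA for $L$ with $k$ states, a word $x^n$ with $n\ge k$ contains a pumpable factor inside the powers of $x$. Concretely, the states reached after reading $x^0,x^1,\dots,x^k$ repeat, so there are $i<j\le k$ with $\delta(z_0,x^i)=\delta(z_0,x^j)$. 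Then $x^n\in L$ with $n\ge i$ implies $x^{n+(j-i)}\in L$, and $x^n$ is a proper prefix of $x^{n+(j-i)}$, which contradicts prefix-freeness. Hence every $n$ with $x^n\in L$ satisfies $n<k$. So $m=k$ works: $J_x^k\cap L=\emptyset$ for all $x\ne\lambda$.

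For $x=\lambda$ we have $J_\lambda^k=\{\lambda\}$, and the condition holds trivially, since a one-element set is either disjoint from $L$ or contained in it. This gives $\PREF\subseteq\PS$. For suffix-free languages, apply the same argument to $L^R$, which is prefix-free and regular, and note that $x^n\in L$ if and only if $(x^R)^n\in L^R$. Alternatively, rerun the argument directly, since $x^n$ is also a proper suffix of $x^{n+(j-i)}$.

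For properness, the language $\{a\}^*$ from Lemma~\ref{lemma:mon_setminus_pref_cup_suff} lies in $\MON$ but is neither prefix-free nor suffix-free. Since $\MON\subset\PS$ via the chain in Figure~\ref{fig:lang_erg_1}, or directly because $J_x^1\subseteq\{a\}^*$, this language lies in $\PS\setminus(\PREF\cup\SUFF)$. The main obstacle is the step bounding the exponent $n$ uniformly by the number of DFA states; the state-repetition argument above handles it.
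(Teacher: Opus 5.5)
Your argument is correct and is essentially the paper's proof: pigeonhole on the DFA states reached after reading $x^0,\dots,x^k$ gives $x^{n+(j-i)}\in L$ whenever $x^n\in L$ with $n\ge i$, so taking $m$ equal to the number of states works (with $x=\lambda$ trivial, the suffix-free case handled by the same proper-suffix observation, and $\{a\}^*$ as the witness for properness). Do drop the exploratory detour from a final write-up, since its claim that the maximal exponent is unbounded in general is false: your own state-repetition step bounds it by the number of states, and $\{a^nb \mid n\ge 0\}$ contains no proper powers at all.
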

\begin{proof}
	We first show the inclusions $\PREF \subseteq \PS$ and $\SUFF \subseteq \PS$. Let $L$ be a regular language 
	over an alphabet $V$
	such that $L \in \PREF$ or $L \in \SUFF$. Since $L$ is regular, there exists a deterministic finite automaton accepting $L$ with $n$ states. We claim that $L$ is power-separating with the constant $m = n$. We show that for any $x \in V^*$, either $J_x^n \cap L = \emptyset$ or $J_x^n \subseteq L$, where $J_x^n = \set{x^k}{k \ge n}$.
	
	If $x = \lambda$, then $J_\lambda^n = \{\lambda\}$. If $\lambda \in L$, then $J_\lambda^n \subseteq L$; otherwise $J_\lambda^n \cap L = \emptyset$.
		
	For the case $x \neq \lambda$, we will show that $J_x^n \cap L = \emptyset$. 
	Let $x\in V^*$.
	Assume for the sake of contradiction that $J_x^n \cap L \neq \emptyset$. This means there exists an integer $k \ge n$ such that $x^k \in L$. Consider the states the DFA ended in after reading the prefixes $x^0, x^1, \dots, x^n$. Since there are $n+1$ states the DFA ended in but only $n$ distinct states, the Pigeonhole Principle dictates that at least two states among them are identical. Let $0 \leq i < j \leq n$ be two indices with 
	$\delta(q_0, x^i) = \delta(q_0, x^j)$ and set $d = j - i $ (note that $d\geq 1$). 
	Then $\delta(q_0, x^t) = \delta(q_0, x^{t+d})$ for all $t \geq i$. 
	Since $k \geq n > i$, we obtain 
	$\delta(q_0, x^{k+d}) = \delta(q_0, x^k) \in F$, so $x^{k+d} \in L$. 
	
	Since $x^{k+d} = x^k x^d = x^d x^k$ and $d \ge 1$, the word $x^k$ is both a proper prefix and a 
	proper suffix of~$x^{k+d}$. Having both $x^k$ and $x^{k+d}$ in $L$ contradicts the assumption 
	that $L$ is prefix-free or suffix-free. Thus, $J_x^n \cap L = \emptyset$ holds, which implies
	$\PREF \subseteq \PS$ and $\SUFF \subseteq \PS$.
	
	The language $L = \monsetminusprefcupsuff$ from Lemma \ref{lemma:mon_setminus_pref_cup_suff} is a witness for the properness of both inclusions. Since $\MON \subset \PS$ and $L \in \MON$, it holds $L \in \PS$. As $L \notin (\PREF \cup \SUFF)$, we obtain $L \in \PS \setminus \PREF$ and $L \in \PS \setminus \SUFF$.
\end{proof}

We now prove the incomparability relations mentioned in Figure~\ref{fig:lang_erg_1} which have not been proved earlier. 
These are the relations regarding the families $\PREF$, $\SUFF$, and $\INFF$. 

\begin{lemma}
	Let $\cF = \{\MON, \UF, \STAR\}$. Every family in $\cF$ is incomparable to the families $\PREF$, $\SUFF$, and $\INFF$.
\end{lemma}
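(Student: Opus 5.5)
For each $F\in\{\MON,\UF,\STAR\}$ and each $G\in\{\PREF,\SUFF,\INFF\}$ we will exhibit a language in $F\setminus G$ and one in $G\setminus F$. The inclusions $\INFF\subset\PREF$ and $\INFF\subset\SUFF$ from Lemma~\ref{lemma:inff_subset_pref_suf}, together with $\MON\subset\STAR\subset\UF$ from Figure~\ref{fig:lang_erg_1}, cut this down to two witnesses.

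\textbf{A language in $\MON\setminus(\PREF\cup\SUFF)$.} Lemma~\ref{lemma:mon_setminus_pref_cup_suff} gives $\{a\}^*$. Since $\MON\subseteq\STAR\subseteq\UF$, this language also lies in $\STAR$ and in $\UF$. Since $\INFF\subseteq\PREF$, it also lies outside $\INFF$. Hence none of $\MON$, $\STAR$, $\UF$ is contained in any of $\PREF$, $\SUFF$, $\INFF$.

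\textbf{A language in $\INFF\setminus\UF$.} Lemma~\ref{lemma:inff_setminus_uf} gives $\{ab,ba\}$. Since $\INFF\subseteq\PREF\cap\SUFF$, this language lies in all three idefix-free families. Since $\MON\subseteq\STAR\subseteq\UF$, it lies outside $\MON$ and $\STAR$ as well. Hence none of $\PREF$, $\SUFF$, $\INFF$ is contained in any of $\MON$, $\STAR$, $\UF$.

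Combining both directions gives incomparability for all nine pairs. The only subtle point is that the membership claims for $\{a\}^*$ and $\{ab,ba\}$ are taken with respect to a fixed alphabet. For $\{a\}^*$ the alphabet is $\{a\}$, and it is monoidal, star, and union-free over $\{a\}$. For $\{ab,ba\}$, non-union-freeness comes from the cited result of \cite{Nagy.2019}, which does not depend on the alphabet. If the inclusion $\STAR\subseteq\UF$ is not considered available from the figure, one can instead check directly that $\{a\}^*=(a)^*$ is a union-free expression and a star of $\{a\}$, so nothing beyond the earlier lemmas is needed.
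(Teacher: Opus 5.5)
Your proposal is correct and matches the paper's proof essentially verbatim. The paper also uses $\MON\subseteq\STAR\subseteq\UF$ and $\INFF\subseteq\PREF\cap\SUFF$ to reduce all nine pairs to the same two witnesses, $\{a\}^*\in\MON\setminus(\PREF\cup\SUFF)$ and $\{ab,ba\}\in\INFF\setminus\UF$, taken from the earlier lemmas.
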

\begin{proof}
	Let $F \in \cF$ and $X \in \{\PREF, \SUFF, \INFF\}$. Due to the inclusion relations of the subregular families, it suffices to show that there are languages
	\begin{align*}
		L_1 \in \MON \setminus (\PREF \cup \SUFF) \text{ and } L_2 \in \INFF \setminus \UF.
	\end{align*}
	
	For the first case, we refer to the 
	language $L_1 = \monsetminusprefcupsuff$ from Lemma \ref{lemma:mon_setminus_pref_cup_suff}. 
	As established there, it holds $L_1 \in \MON \setminus (\PREF \cup \SUFF)$. Since $\INFF \subseteq \PREF \cap \SUFF$, it also holds $L_1 \notin \INFF$, which implies $L_1 \notin X$. Since $\MON \subseteq \STAR \subseteq \UF$, we have $\MON \subseteq F$ for all $F \in \cF$. Thus, $L_1 \in F \setminus X$.
	
	For the second case, we consider the language $L_2 = \inffsetminusuf$ from Lemma \ref{lemma:inff_setminus_uf}. The lemma shows that $L_2 \in \INFF \setminus \UF$. Since $\INFF \subseteq \PREF \cap \SUFF$, it holds $L_2 \in X$. Since $\STAR \subseteq \UF$, the relation $L_2 \notin \UF$ implies $L_2 \notin F$ for all $F \in \cF$. Thus, $L_2 \in X \setminus F$.
	
	Since both directions of inclusion are refuted, the assertion holds.
\end{proof}

\begin{lemma}
	Let $\cF = \{\TCOM, \LCOM, \RCOM, \SYDEF, \COMB\}$. Every family in $\cF$ is incomparable to the families $\PREF$, $\SUFF$, and $\INFF$.
\end{lemma}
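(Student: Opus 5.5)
Following the pattern of the previous lemma, we use the known inclusions $\COMB \subseteq \SYDEF \subseteq \LCOM \subseteq \TCOM$ and $\SYDEF \subseteq \RCOM \subseteq \TCOM$ (see Figure~\ref{fig:lang_erg_1}) together with $\INFF \subseteq \PREF \cap \SUFF$ from Lemma~\ref{lemma:inff_subset_pref_suf}. It then suffices to exhibit two witness languages,
\[
L_1 \in \COMB \setminus (\PREF \cup \SUFF) \qmand L_2 \in \INFF \setminus \TCOM .
\]

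Let $F \in \cF$ and $X \in \{\PREF, \SUFF, \INFF\}$. The language $L_1$ is in the smallest family $\COMB$, so it lies in every $F \in \cF$. It also lies outside $\PREF \cup \SUFF \supseteq X$. Hence $L_1 \in F \setminus X$.

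The language $L_2$ is in the smallest family $\INFF$, so it lies in every $X$. It also lies outside the largest family $\TCOM \supseteq F$. Hence $L_2 \in X \setminus F$.

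For $L_1$, we take $\combsetminusprefcupsuff$ from Lemma~\ref{lemma:comb_setminus_pref_cup_suff}. For $L_2$, we take $\inffsetminustwocompresuf$ from Lemma~\ref{lemma:inff_setminus_twocom_pre_suf}, which shows $L_2 \in \INFF \setminus \TCOM$ because every two-sided comet language is empty or infinite.

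The only step needing care is confirming that the chains of inclusions really reach from $\COMB$ to each member of $\cF$ and from each member up to $\TCOM$. For $\COMB \subseteq \RCOM$ and $\COMB \subseteq \LCOM$, this goes via $\SYDEF$ using the cited results. With both witnesses in hand, neither inclusion direction holds between any $F \in \cF$ and any $X$, which gives incomparability.
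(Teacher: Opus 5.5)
Your proposal is correct and follows the paper's proof essentially verbatim. It uses the same reduction to two witnesses, $L_1 \in \COMB \setminus (\PREF \cup \SUFF)$ and $L_2 \in \INFF \setminus \TCOM$, instantiated with the same languages $\{a,b\}^*\{a\}$ and $\{ab\}$ from the earlier lemmas. Your explicit check of the inclusion chains from $\COMB$ through $\SYDEF$ to $\LCOM$, $\RCOM$ and $\TCOM$ simply spells out what the paper summarizes as ``due to the inclusion relations of the subregular families''.
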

\begin{proof}
	Let $F \in \cF$ and $X \in \{\PREF, \SUFF, \INFF\}$. Due to the inclusion relations of the subregular families, it suffices to show that there are languages 
	\begin{align*}
		L_1 \in \COMB \setminus (\PREF \cup \SUFF) \text{ and } L_2 \in \INFF \setminus \TCOM.
	\end{align*}
	
	For the first case, we refer to the language $L_1 = \combsetminusprefcupsuff$ from Lemma \ref{lemma:comb_setminus_pref_cup_suff}. As established there, $L_1 \in \COMB \setminus (\PREF \cup \SUFF)$. Since $\INFF \subseteq \PREF \cap \SUFF$, it also holds $L_1 \notin \INFF$, which implies $L_1 \notin X$. Since $\COMB \subseteq F$ for all $F \in \cF$, we have $L_1 \in F$. Thus, $L_1 \in F \setminus X$.
	
	For the second case, we consider the language $L_2 = \inffsetminustwocompresuf$ from Lemma \ref{lemma:inff_setminus_twocom_pre_suf}. The lemma shows that $L_2 \in \INFF$, which implies $L_2 \in X$, and that $L_2 \notin \TCOM$. Since all families $F \in \cF$ are subsets of $\TCOM$, the relation $L_2 \notin \TCOM$ implies $L_2 \notin F$. Thus, $L_2 \in X \setminus F$.
	
	Since both directions of inclusion are refuted, the assertion holds.
\end{proof}

\begin{lemma}
	Let $\cF = \{\NC, \ORD, \DEF, \NIL, \FIN\}$. Every family in $\cF$ is incomparable to the families $\PREF$, $\SUFF$, and $\INFF$.
\end{lemma}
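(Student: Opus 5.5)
Following the pattern of the two preceding lemmas, I would fix $F \in \cF$ and $X \in \{\PREF, \SUFF, \INFF\}$ and use the chain $\FIN \subset \NIL \subset \DEF \subset \ORD \subset \NC$ from Figure~\ref{fig:lang_erg_1}. It then suffices to exhibit two languages,
\[
L_1 \in \FIN \setminus (\PREF \cup \SUFF)
\quad\mbox{and}\quad
L_2 \in \INFF \setminus \NC .
\]
Since $\FIN \subseteq F$ for every $F \in \cF$ and $\INFF \subseteq \PREF \cap \SUFF$ by Lemma~\ref{lemma:inff_subset_pref_suf}, the language $L_1$ lies in $F \setminus X$ for all choices. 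Since every $F \in \cF$ is contained in $\NC$ and $\INFF \subseteq X$, the language $L_2$ lies in $X \setminus F$ for all choices.

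For $L_2$ I take the language $\inffsetminusnc$ from Lemma~\ref{lemma:inff_setminus_nc}, which already gives $L_2 \in \INFF \setminus \NC$.

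For $L_1$ I would take a finite language such as $\{a, aa\}$ over the alphabet $\{a\}$. It is finite, and the word $a$ is both a proper prefix and a proper suffix of $aa$, so $L_1 \notin \PREF \cup \SUFF$. Hence $L_1 \notin \INFF$ as well.

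The only small obstacle is that no earlier lemma provides a finite language outside $\PREF \cup \SUFF$. The witnesses of Lemmas~\ref{lemma:pref_setminus_suff} and~\ref{lemma:suff_setminus_pref} each fail only one of the two properties, and the witnesses of Lemmas~\ref{lemma:mon_setminus_pref_cup_suff} and~\ref{lemma:comb_setminus_pref_cup_suff} are infinite. So the short verification for $\{a, aa\}$ has to be written out directly, as above.

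Both inclusions are thereby refuted for every pair $(F, X)$, which gives the incomparability.
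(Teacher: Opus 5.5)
Your proposal is correct and matches the paper's proof. It uses the same reduction via $\FIN \subseteq \NIL \subseteq \DEF \subseteq \ORD \subseteq \NC$ and $\INFF \subseteq \PREF \cap \SUFF$, with the same witnesses: $\{a, aa\}$, verified directly as the paper also does, and $\{a(bb)^n a \mid n \ge 0\}$ from Lemma~\ref{lemma:inff_setminus_nc}.
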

\begin{proof}
	Let $F \in \cF$ and $X \in \{\PREF, \SUFF, \INFF\}$. Due to the inclusion relations of the subregular families, it suffices to show that there are languages
	\begin{align*}
		L_1 \in \FIN \setminus (\PREF \cup \SUFF) \text{ and } L_2 \in \INFF \setminus \NC.
	\end{align*}
	
	For the first case, we consider the language $L_1 = \{a, aa\}$. Since $L_1$ is finite, $L_1 \in \FIN$. The language $L_1$ is neither prefix-free nor suffix-free, yielding $L_1 \notin \PREF \cup \SUFF$. Since $\INFF \subseteq \PREF \cap \SUFF$, it also holds $L_1 \notin \INFF$, which implies $L_1 \notin X$. Due to the inclusions $\FIN \subseteq \NIL \subseteq \DEF \subseteq \ORD \subseteq \NC$, we have $\FIN \subseteq F$ for all $F \in \cF$, and thus $L_1 \in F$. Therefore, $L_1 \in F \setminus X$.
	
	For the second case, we refer to the language $L_2 = \inffsetminusnc$ from Lemma \ref{lemma:inff_setminus_nc}. 
	The lemma shows that $L_2 \in \INFF \setminus \NC$. Since $\INFF \subseteq \PREF \cap \SUFF$, it holds $L_2 \in X$. 
	Since all families $F \in \cF$ are subsets of $\NC$, the relation $L_2 \notin \NC$ implies $L_2 \notin F$. Thus, $L_2 \in X \setminus F$.
	
	Since both directions of inclusion are refuted, the assertion holds.
\end{proof}

\begin{lemma}
	Let $\cF = \{\PRE, \SUF, \INF\}$. Every family in $\cF$ is incomparable to the families $\PREF, \SUFF$, and~$\INFF$.
\end{lemma}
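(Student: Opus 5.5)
Following the pattern of the preceding lemmas, I would reduce the statement to exhibiting two witness languages. Let $F \in \{\PRE, \SUF, \INF\}$ and $X \in \{\PREF, \SUFF, \INFF\}$. Since $\INF \subseteq \PRE$ and $\INF \subseteq \SUF$ (see Figure~\ref{fig:lang_erg_1}), and $\INFF \subseteq \PREF \cap \SUFF$ by Lemma~\ref{lemma:inff_subset_pref_suf}, it suffices to find
\begin{align*}
L_1 \in \INF \setminus (\PREF \cup \SUFF) \qmand L_2 \in \INFF \setminus (\PRE \cup \SUF).
\end{align*}
Indeed, $L_1$ then lies in every $F$, and it lies in no $X$ because $\INFF \subseteq \PREF$. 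Likewise, $L_2$ lies in every $X$, and it lies in no $F$ because $\INF \subseteq \PRE$.

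For $L_1$, I would reuse $\monsetminusprefcupsuff$ from Lemma~\ref{lemma:mon_setminus_pref_cup_suff}. It is monoidal and hence infix-closed, using the inclusion $\MON \subset \INF$ from \cite{Koedding.Truthe.2025.idefix}. One can also check this directly: every infix of a word in $\{a\}^*$ lies in $\{a\}^*$. That lemma already shows the language is neither prefix-free nor suffix-free.

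For $L_2$, I would reuse $\inffsetminustwocompresuf$ from Lemma~\ref{lemma:inff_setminus_twocom_pre_suf}. That lemma establishes $L_2 \in \INFF$ and $L_2 \notin \PRE \cup \SUF$. Since $\INF \subseteq \PRE$, it also follows that $L_2 \notin \INF$.

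Combining the two witnesses refutes inclusion in both directions for every pair $(F, X)$, which gives incomparability. There is no real obstacle here. The only point needing care is to invoke the inclusions correctly, namely $\MON \subseteq \INF \subseteq \PRE \cap \SUF$ and $\INFF \subseteq \PREF \cap \SUFF$, so that each witness covers all nine pairs at once.
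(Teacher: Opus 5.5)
Your proposal is correct and matches the paper's proof: it makes the same reduction to witnesses $L_1\in\INF\setminus(\PREF\cup\SUFF)$ and $L_2\in\INFF\setminus(\PRE\cup\SUF)$ and uses the same languages $\{a\}^*$ and $\{ab\}$ from the earlier lemmas. The only cosmetic difference is how $L_2\notin\INF$ is obtained: you derive it from $\INF\subseteq\PRE$, while the paper notes directly that $\lambda\notin L_2$.
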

\begin{proof}
	Let $F \in \cF$ and $X \in \{\PREF, \SUFF, \INFF\}$. Due to the inclusion relations of the subregular families, it suffices to show that there are languages
	\begin{align*}
		L_1 \in \INF \setminus (\PREF \cup \SUFF) \text{ and } L_2 \in \INFF \setminus (\PRE \cup \SUF).
	\end{align*}
	For the first case, we refer to the language $L_1 = \monsetminusprefcupsuff$ from 
	Lemma \ref{lemma:mon_setminus_pref_cup_suff}. As established there, it holds $L_1 \in \MON \setminus (\PREF \cup \SUFF)$. Since $\INFF \subseteq (\PREF \cap \SUFF)$, it also holds $L_1 \notin \INFF$, which implies $L_1 \notin X$. Since $\MON \subseteq \INF \subseteq (\PRE \cap \SUF)$, we have $\MON \subseteq F$ for all $F \in \cF$, yielding $L_1 \in F$. Thus, $L_1 \in F \setminus X$.
	
	For the second case, we consider the language $L_2 = \inffsetminustwocompresuf$ from Lemma \ref{lemma:inff_setminus_twocom_pre_suf}. The lemma shows that $L_2 \in \INFF$, meaning $L_2 \in X$, and that $L_2 \notin \PRE \cup \SUF$. Since closure under subwords would require the empty word $\lambda$ to be in the language, which is false for $L_2$, we also have $L_2 \notin \INF$. Thus, $L_2 \notin F$ for all $F \in \cF$, yielding $L_2 \in X \setminus F$.
	
	Since both directions of inclusion are refuted, the assertion holds.
\end{proof}
\pagebreak

\begin{lemma}
	Let $\cF = \{\COMM, \CIRC\}$. Every family in $\cF$ is incomparable to the families $\PREF, \SUFF$, and~$\INFF$.
\end{lemma}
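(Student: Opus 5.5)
Following the pattern of the preceding lemmas, I fix $F \in \cF$ and $X \in \{\PREF, \SUFF, \INFF\}$. Because $\COMM \subset \CIRC$ and $\INFF \subseteq \PREF \cap \SUFF$ (Lemma~\ref{lemma:inff_subset_pref_suf}), it suffices to exhibit two languages
\[
L_1 \in \COMM \setminus (\PREF \cup \SUFF) \qmand L_2 \in \INFF \setminus \CIRC.
\]
Then $L_1 \in \COMM \subseteq F$ and $L_1 \notin X$. Likewise $L_2 \in \INFF \subseteq X$ and $L_2 \notin \CIRC \supseteq F$, so $L_2 \notin F$. Together these refute both inclusions for every pair $(F,X)$.

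For $L_1$ I take the language $\monsetminusprefcupsuff$ from Lemma~\ref{lemma:mon_setminus_pref_cup_suff}. It is monoidal, and $\MON \subset \COMM$ holds by the hierarchy in Figure~\ref{fig:lang_erg_1} (it is also clear directly, since $V^*$ is closed under permutations). By that lemma, it lies outside $\PREF \cup \SUFF$, and hence also outside $\INFF$.

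For $L_2$ I take $\inffsetminustwocompresuf$ from Lemma~\ref{lemma:inff_setminus_twocom_pre_suf}. That lemma already states $L_2 \in \INFF \setminus \CIRC$, because the circular shift $ba$ of $ab$ is not in $L_2$. Since $\COMM \subseteq \CIRC$, it follows that $L_2 \notin \COMM$ as well.

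No step is a real obstacle; everything reduces to earlier lemmas and the known inclusions. The only point to check is that both languages are regular, so that they belong to the regular versions of $\COMM$ and the idefix-free families. This is immediate, as one is $V^*$ and the other is finite.
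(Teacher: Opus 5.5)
Your proposal is correct and matches the paper's proof: it uses the same two witnesses. The first is $\{a\}^*$, which is monoidal and hence in both $\COMM$ and $\CIRC$, but is neither prefix-free nor suffix-free. The second is $\{ab\}$, which is infix-free but not circular and hence not commutative. The only difference is that you cite Lemma~\ref{lemma:inff_setminus_twocom_pre_suf} for $\{ab\} \notin \CIRC$ where the paper re-argues it directly.
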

\begin{proof}
	Let $F \in \cF$ and $X \in \{\PREF, \SUFF, \INFF\}$. Due to the inclusion relations of the subregular families, it suffices to show that there are languages
	\begin{align*}
		L_1 \in \MON \setminus (\PREF \cup \SUFF) \text{ and } L_2 \in \INFF \setminus \CIRC.
	\end{align*}
	For the first case, we refer to the language $L_1 = \monsetminusprefcupsuff$ from 
	Lemma \ref{lemma:mon_setminus_pref_cup_suff}. As established there, it holds $L_1 \in \MON \setminus (\PREF \cup \SUFF)$. Since $\INFF \subseteq (\PREF \cap \SUFF)$, it also holds $L_1 \notin \INFF$, which implies $L_1 \notin X$. Since $\MON \subseteq \COMM$ and $\MON \subseteq \CIRC$, we have $\MON \subseteq F$ for all $F \in \cF$, yielding $L_1 \in F$. Thus, $L_1 \in F \setminus X$.
	
	For the second case, we consider the language $L_2 = \{ab\}$. No word in $L_2$ is a proper infix of another, meaning $L_2 \in \INFF$, which implies $L_2 \in X$. However, $L_2$ lacks the word $ba$, violating the closure properties of commutativity and circularity, so $L_2 \notin (\COMM \cup \CIRC)$. Furthermore, it holds that $L_2 \notin \MON$. Thus, $L_2 \notin F$ for all $F \in \cF$, yielding $L_2 \in X \setminus F$.
	
	Since both directions of inclusion are refuted, the assertion holds.
\end{proof}

From all these relations, the hierarchy presented in Figure~\ref{fig:lang_erg_1} follows.

\begin{theorem}[Resulting hierarchy for subregular families]\label{theorem:neue_hierarchie}
The inclusion relations presented in Figure \ref{fig:lang_erg_1} hold. An arrow from an entry $X$ to
an entry~$Y$ depicts the proper inclusion $X \subset Y$; if two families are not connected by a directed
path, they are incomparable.
\end{theorem}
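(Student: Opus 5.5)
The theorem collects everything shown in this section with the relations already known from the literature, so the plan is a bookkeeping argument over the edges and non-edges of Figure~\ref{fig:lang_erg_1}. I would split the nodes into the \emph{old} families (everything except $\PREF$, $\SUFF$, $\INFF$) and the three \emph{new} families. For pairs of old families, every arrow and every incomparability is taken from the cited sources: \cite{Holzer_Truthe.2015}, \cite{Koedding.Truthe.2024}, \cite{Koedding.Truthe.2025.idefix}, \cite{Olejar_Szabari.2023}, \cite{Bordihn_Holzer_Kutrib.2009}, \cite{Truthe.2018}, \cite{Wiedemann.1978}, \cite{Havel.1969}, \cite{Shyr_Thierrin.1974.ord}, \cite{Shyr_Thierrin.1974.ps}, together with $\NC=\SF$ from \cite{McNaughton_Papert.1971}. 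It remains to justify every relation that involves at least one new family.

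\textbf{Proper inclusions.} The new arrows are $\INFF\subset\PREF$, $\INFF\subset\SUFF$ (Lemma~\ref{lemma:inff_subset_pref_suf}) and $\PREF\subset\PS$, $\SUFF\subset\PS$ (Lemma~\ref{lemma:pref_suff_subset_ps}). Transitivity then gives $\INFF\subset\PS\subset\REG$. The paths $\PREF\to\PS\to\REG$ and $\SUFF\to\PS\to\REG$ supply the inclusions in $\REG$; these are proper because, for example, the language of Lemma~\ref{lemma:inff_setminus_nc} lies in $\REG\setminus\NC$, hence outside $\PS$ as well. Every directed path in the figure is therefore a proper inclusion.

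\textbf{Incomparabilities.} First, $\PREF$ and $\SUFF$ are incomparable by Lemmas~\ref{lemma:pref_setminus_suff} and~\ref{lemma:suff_setminus_pref}. Next, I would check that each old family not on a directed path with a new family appears in one of the five incomparability lemmas above. These families are $\MON$, $\UF$, $\STAR$, $\TCOM$, $\LCOM$, $\RCOM$, $\SYDEF$, $\COMB$, $\NC$, $\ORD$, $\DEF$, $\NIL$, $\FIN$, $\PRE$, $\SUF$, $\INF$, $\COMM$ and $\CIRC$; only $\PS$ and $\REG$ lie above the new families.

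\textbf{Main obstacle.} The hard part is completeness: confirming that no pair is missed and that no old family secretly has a directed path to a new one. That would contradict the witness languages in $\MON$, $\COMB$ or $\FIN$ lying outside $\PREF\cup\SUFF$. I would handle this with a simple table pass, marking each pair as either on a directed path or covered by a cited or proved incomparability lemma.
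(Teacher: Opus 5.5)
Your plan is the paper's plan. The paper gives no argument for this theorem beyond ``from all these relations the hierarchy follows.'' That means: cited results for pairs of old families, Lemmas~\ref{lemma:inff_subset_pref_suf} and~\ref{lemma:pref_suff_subset_ps} for the new arrows, and the five incomparability lemmas for everything else involving $\PREF$, $\SUFF$, $\INFF$. Your list of eighteen old families matches exactly the families treated in those five lemmas.

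There is, however, one false step. In the paragraph on proper inclusions you justify $\PREF\subset\REG$ and $\SUFF\subset\REG$ with the language $\{a(bb)^na \mid n\ge 0\}$ of Lemma~\ref{lemma:inff_setminus_nc}. You claim it lies outside $\PS$ because it lies outside $\NC$. That inference runs the wrong way: since $\NC\subset\PS$, not being in $\NC$ says nothing about membership in $\PS$. Worse, this language is infix-free, hence prefix-free and suffix-free, hence in $\PS$ by Lemma~\ref{lemma:pref_suff_subset_ps}. Because it lies in $\INFF\subseteq\PREF\cap\SUFF$, it cannot witness $\REG\not\subseteq\PREF$ at all.

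The repair needs no witness. A proper inclusion followed by an inclusion is proper, so $\PREF\subset\PS\subseteq\REG$ already gives $\PREF\subset\REG$. Concretely, any language in $\PS\setminus\PREF$, such as $\{a\}^*$, lies in $\REG\setminus\PREF$, and the same argument works for $\SUFF$ and $\INFF$. Alternatively, use the cited proper inclusion $\PS\subset\REG$. With that sentence replaced, your bookkeeping argument is complete and matches the paper.
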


\section{Results on subregular control in external contextual grammars}

In this section, we include the families of languages generated by external contextual grammars with 
selection languages from the subregular families under investigation into the existing hierarchy 
with respect to external contextual grammars.

If, in a contextual grammar, all selection languages belong to some language family~$X$, then they belong also
to every super set $Y$ of $X$. Therefore, each language in~$\ec{X}$ is also generated by a contextual grammar
with selection languages from $Y$ and we have the following monotonicity.

\begin{lemma}[Monotonicity $\cEC$]\label{lemma:ec_monoton}
	For any two language classes $X$ and $Y$ with~$X\subseteq Y$,
	we have the inclusion~$\cEC(X)\subseteq\cEC(Y)$.
\end{lemma}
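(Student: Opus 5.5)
The plan is a direct simulation: given a language $L\in\ec{X}$, I take a grammar witnessing this and show that the very same grammar is already a contextual grammar with selection in $Y$.

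First, fix $L\in\ec{X}$ and a contextual grammar $G=(V,\cS,A)$ with selection in $X$ such that $L=L_\mathrm{ex}(G)$. By definition, every selection pair $(S,C)\in\cS$ has a selection language $S$ over some subset $U\subseteq V$ with $S\in X$ with respect to $U$. Since $X\subseteq Y$, the same $S$ belongs to $Y$ with respect to the same alphabet $U$. The contexts $C$ and the axioms $A$ satisfy no family-dependent conditions: $C$ is finite with $uv\neq\lambda$, and $A$ is finite. Hence $G$ is also a contextual grammar with selection in $Y$.

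Second, I note that the external derivation relation $\Lra_\mathrm{ex}$ depends only on the components $V$, $\cS$, and $A$, not on which family we regard the selection languages as coming from. Therefore the generated language $L_\mathrm{ex}(G)$ is unchanged, which gives $L\in\ec{Y}$ and hence $\ec{X}\subseteq\ec{Y}$.

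There is no real obstacle. The only point needing care is the relativization ``with respect to the alphabet $U$'' in the definition of selection pairs. The inclusion $X\subseteq Y$ must be read as inclusion of the families relative to each alphabet, which is how all inclusions in Theorem~\ref{theorem:neue_hierarchie} are meant. Under that reading, membership of $S$ in $Y$ relative to $U$ follows immediately.
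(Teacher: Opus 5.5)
Your proposal is correct and matches the paper's own justification given just before the lemma: a grammar whose selection languages all lie in $X$ is already a grammar with selection in $Y$, and it generates the same language. Your remark that $X\subseteq Y$ should be read relative to each alphabet $U$ is a sensible clarification that the paper leaves implicit.
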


Figure~\ref{fig:ec_erg} shows 
the inclusion relations between language families 
which are generated by external
contextual grammars where the selection languages belong to subregular classes investigated before. The 
hierarchy contains results which were already known (marked by a reference to the literature) and results which 
will be proved in this section (marked by a number which refers to the respective lemma).

\begin{figure}[htb]
	\centering
		\scalebox{.77}{
			\begin{tikzpicture}[node distance=15mm and 25mm,on grid=true
				]
				\node (MON) {$\ec{\MON}$};
				\node (FIN) [below=of MON] {$\ec{\FIN} \stackrel{\text{ Le. \ref{cor:ec_inff_eq_fin}}}{=} \ec{\INFF}$};
				\node (COMB)[above=of MON] {$\ec{\COMB}$};
				\node (NIL) [right=of COMB] (NIL) {$\ec{\NIL}$};
				\node (DEF) [above=of COMB] {$\ec{\DEF}$};
				\node (ORD) [above=of DEF] {$\ec{\ORD}$};
				\node (SYDEF) [right=of ORD] {$\ec{\SYDEF}$};
				\node (INF) at (-3.7, 3) {$\ec{\INF}$};
				\node (SUF) [left=of ORD] {$\ec{\SUF}$};
				\node (PRE) [left=of SUF] {$\ec{\PRE}$};
				
				\node (PREF) [right=of SYDEF] {$\ec{\PREF}$};
				\node (SUFF) [right=of PREF] {$\ec{\SUFF}$};
				
				\node (COMM) [right=of SUFF] {$\ec{\COMM}$};

				\node (STAR) [left=of PRE] {$\ec{\STAR}$};
				\node (NC) [above=of ORD] {$\ec{\NC}$};
				\node (PS) [above=of NC] {$\ec{\PS}$};
				\node (CIRC) [above=of COMM] {$\ec{\CIRC}$};
				\node (REG) [above=of PS] {$\ec{\REG} \stackrel{\text{\cite{Dassow_Manea_Truthe.2012}}}{=} \ec{\UF} \stackrel{\text{\cite{Koedding.Truthe.2024}}}{=} \cEC(\LCOM) \stackrel{\text{\cite{Koedding.Truthe.2024}}}{=} \cEC(\RCOM) \stackrel{\text{\cite{Koedding.Truthe.2024}}}{=} \cEC(\TCOM)$};
				
				\draw[hier] (FIN) to node[pos=.45, edgeLabel]{\small\cite{Dassow.2005}} (MON);
				\draw[hier, bend right] (MON) to node[pos=.45, edgeLabel]{\small\cite{Dassow.2005}} (NIL);
				\draw[hier] (MON) to node[pos=.45, edgeLabel]{\small\cite{Dassow.2015}} (COMB);
				\draw[hier] (COMB) to node[pos=.45, edgeLabel]{\small\cite{Truthe.2021}} (DEF);
				\draw[hier] (DEF) to node[pos=.45, edgeLabel]{\small\cite{Truthe.2014}} (ORD);
				\draw[hier, bend right=15] (DEF) to node[pos=.45, edgeLabel]{\small\cite{Koedding.Truthe.2024}} (SYDEF);
				\draw[hier] (ORD) to node[pos=.45, edgeLabel]{\small\cite{Dassow_Truthe.2023}} (NC);
				\draw[hier] (NC) to node[pos=.45, edgeLabel]{\small\cite{Truthe.2021}} (PS);
				\draw[hier, bend left=30] (MON) to node[pos=.45, edgeLabel]{\small\cite{Koedding.Truthe.2025.idefix}}
				(INF);
				\draw[hier, bend right=30] (INF) to node[pos=.45, edgeLabel]{\small\cite{Koedding.Truthe.2025.idefix}}
				(SUF);
				\draw[hier, bend left=30] (INF) to node[pos=.45, edgeLabel]{\small\cite{Koedding.Truthe.2025.idefix}}
				(PRE);
				\draw[hier, bend left=40] (SUF) to node[pos=.45, edgeLabel]{\small\cite{Truthe.2021}} (PS);
				\draw[hier, bend left=30] (PRE) to node[pos=.45, edgeLabel]{\small\cite{Koedding.Truthe.2025.idefix}}
				(PS);
				\draw[hier, bend left=35] (MON) to node[pos=.45, edgeLabel]{\small\cite{Koedding.Truthe.2024}} (STAR);
				\draw[hier, bend left=25] (STAR) to node[pos=.45, edgeLabel]{\small\cite{Koedding.Truthe.2024}} (REG);
				\draw[hier, bend right] (SYDEF) to node[pos=.45, edgeLabel]{\small\cite{Koedding.Truthe.2024}} (PS);
				\draw[hier, bend right] (NIL) to node[pos=.65, edgeLabel]{\small\cite{Dassow.2005}} (COMM);
				\draw[hier, bend right] (NIL) to node[pos=.45, edgeLabel]{\small \cite{Dassow.2005}} (DEF);
				\draw[hier] (COMM) to node[pos=.45, edgeLabel]{\small\cite{Dassow_Manea_Truthe.2012}} (CIRC);
				\draw[hier, bend right=15] (CIRC) to node[pos=.45, edgeLabel]{\small\cite{Dassow_Manea_Truthe.2012}} (REG);
				\draw[hier] (PS) to node[pos=.45, edgeLabel]{\small\cite{Truthe.2021}} (REG);
				\draw[hier, bend right] (FIN) to node[pos=.45, edgeLabel]{\small Le. \ref{cor:ec_inclusions}} (PREF);
				\draw[hier, bend right] (FIN) to node[pos=.45, edgeLabel]{\small Le. \ref{cor:ec_inclusions}} (SUFF);
				\draw[hier, bend right=22] (PREF) to node[pos=.45, edgeLabel]{\small Le. \ref{lemma:ec_pref_suff_subset_ps}} (PS);
				\draw[hier, bend right=25] (SUFF) to node[pos=.45, edgeLabel]{\small Le. \ref{lemma:ec_pref_suff_subset_ps}} (PS);
			\end{tikzpicture}
		}
		\caption{Resulting hierarchy of language families by external contextual grammars with special selection languages.}
		\label{fig:ec_erg}
	\end{figure}
	
	We now present some languages which will 
	serve later 
	as witness languages for proper inclusions or
	incomparabilities.
\pagebreak

\begin{lemma}\label{lemma:ec_pref_setminus_suff}
	Let $L = \{a\}^*\{b\}$. Then, it holds $L \in \ec{\PREF} \setminus \ec{\SUFF}$.
\end{lemma}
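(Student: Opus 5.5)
The plan is to prove the two parts separately: first exhibit a contextual grammar with prefix-free selection that generates $L=\{a\}^*\{b\}$, and then show by a pigeonhole argument that no grammar with suffix-free selection languages can generate $L$.

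\emph{Membership in $\ec{\PREF}$.} I would take the grammar
\[G=(\{a,b\},\{S\to(a,\lambda)\},\{b\}) \quad\text{with}\quad S=\{a\}^*\{b\}.\]
The selection language $S$ is regular. It is prefix-free, because a proper prefix of $a^mb$ ends with $a$ or is empty, so it is never of the form $a^nb$. Starting from the axiom $b$, every word $a^nb$ lies in $S$ and derives $a^{n+1}b$. No other words can arise, since every derived word again has the form $a^nb$. Hence $L(G)=L$.

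\emph{Non-membership in $\ec{\SUFF}$.} Assume, for a contradiction, that $G'=(V,\{S_1\to C_1,\dots,S_k\to C_k\},A)$ has suffix-free selection languages and satisfies $L(G')=L$. I would proceed in four steps.
\begin{enumerate}
\item Fix any $n>\ellA(G')$. The word $a^nb$ is longer than every axiom, so it is not an axiom. Therefore it arises in a derivation step $x\Lra uxv=a^nb$, where $x\in S_i$ and $(u,v)\in C_i$ for some $i$.
\item Every word in a derivation belongs to $L(G')=L$, so $x=a^mb$ for some $m$. Since $uxv=a^nb$ ends with the single $b$ already occurring in $x$, we get $v=\lambda$ and $u=a^{n-m}$ with $n-m\ge 1$ (because $uv\ne\lambda$). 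Moreover $m\ge n-\ellC(G')$.
\item Letting $n$ grow, the corresponding values of $m$ are unbounded. Hence infinitely many distinct words $a^mb$ occur as selected words. Since there are only $k$ selection pairs, the pigeonhole principle gives one index $i$ such that $S_i$ contains two words $a^mb$ and $a^{m'}b$ with $m<m'$.
\item The word $a^mb$ is then a proper suffix of $a^{m'}b$, which contradicts the suffix-freeness of $S_i$.
\end{enumerate}

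The main point needing care is step 2: justifying that the selected word $x$ must itself belong to $L$, and that the context can only be added on the left. This follows because every intermediate word of a derivation lies in $L(G')$. It is also unaffected by $S_i$ being defined over some subalphabet $U\subseteq V$, since suffix-freeness depends only on the words of $S_i$.
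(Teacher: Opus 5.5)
Your proof is correct and follows essentially the same route as the paper. You use the same grammar $(\{a,b\},\{\{a\}^*\{b\}\to(a,\lambda)\},\{b\})$ for membership in $\ec{\PREF}$, and the same pigeonhole argument placing two words $a^mb$, $a^{m'}b$ with $m<m'$ in one suffix-free selection language; your steps 1--3 (including the bound $m\ge n-\ellC(G')$) just spell out the claim, stated in one line in the paper, that infinitely many distinct words of $L$ must actually be selected.
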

\begin{proof}
	The external contextual grammar $G = (\{a,b\}, \{\{a\}^*\{b\} \to (a, \lambda)\}, \{b\})$ generates the language~$L$. 
	For any integers $m$ and $n$ with $m < n$, the word $a^m b$ is not a prefix of $a^n b$ since the letter $b$ appears 
	exactly once at the very end of each word. Thus, the selection language $\{a\}^*\{b\}$ is prefix-free, 
	yielding~$L_1 \in \ec{\PREF}$. 
	
	Assume $L \in \ec{\SUFF}$. Any grammar generating the infinite language $L_1$ from a finite set of axioms 
	must have at least one selection language $S \in \SUFF$ containing infinitely many words of $L$. Thus, 
	there exist integers $x$ and $y$ with $x < y$ such that $a^x b \in S$ and $a^y b \in S$. 
	Since $a^y b = a^{y-x} a^x b$ and $a^{y-x} \neq \lambda$, the word~$a^x b$ is a proper suffix of $a^y b$. 
	This contradicts the assumption that $S \in \SUFF$. Therefore, $L_1 \notin \ec{\SUFF}$, which yields the assertion.
\end{proof}

\begin{lemma}\label{lemma:ec_suff_setminus_pref}
	Let $L = \{b\}\{a\}^*$. Then, it holds $L \in \ec{\SUFF} \setminus \ec{\PREF}$.
\end{lemma}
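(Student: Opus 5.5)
The plan mirrors Lemma~\ref{lemma:ec_pref_setminus_suff} with the roles of prefixes and suffixes exchanged.

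For the membership $L \in \ec{\SUFF}$, I will use the external contextual grammar
\[G = (\{a,b\}, \{\{b\}\{a\}^* \to (\lambda, a)\}, \{b\}).\]
Starting from the axiom $b$, every word of the form $ba^n$ lies in the selection language. Hence the context $(\lambda,a)$ can always be appended, and exactly the words $ba^{n+1}$ are derived. This gives $L_\mathrm{ex}(G) = \{b\}\{a\}^*$. The selection language $\{b\}\{a\}^*$ is regular, and it is suffix-free. Indeed, if $ba^m$ were a proper suffix of $ba^n$ with $m<n$, then $ba^n = x\,ba^m$ with $x \neq \lambda$. That would force a second occurrence of $b$ in $ba^n$, which is impossible since $b$ occurs exactly once, at the first position.

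For $L \notin \ec{\PREF}$, I will argue by contradiction. Suppose $G' = (V,\cS,A)$ has prefix-free selection languages and generates $L$. Since $A$ is finite and $L$ is infinite, infinitely many words of $L$ arise by a derivation step $x \Lra uxv$ with $x \in L$. By the pigeonhole principle over the finitely many selection pairs, some selection language $S$ contains infinitely many words of $L$. In particular it contains words $ba^x$ and $ba^y$ with $x<y$. Then $ba^x$ is a proper prefix of $ba^y = ba^x a^{y-x}$, contradicting $S \in \PREF$.

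The main point to make precise is the claim that some selection language must contain infinitely many words of $L$. Every derived word $z \in L$ that is not an axiom comes from a word $x$ with $x \Lra z$. That $x$ lies in $L$, since it is reachable from an axiom, and it lies in the selection language used. The contexts have bounded length $\ellC(G')$, so $|x| \ge |z| - \ellC(G')$. Hence infinitely many distinct $z$ require infinitely many distinct predecessors $x \in L$ belonging to selection languages. Only finitely many selection pairs exist, so one of them receives infinitely many such $x$. This is the same reasoning used implicitly in Lemma~\ref{lemma:ec_pref_setminus_suff}.
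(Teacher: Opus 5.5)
Your proof is correct and follows the paper's argument. You use the same grammar with selection language $\{b\}\{a\}^*$ and context $(\lambda,a)$, the same suffix-freeness argument via the unique leading $b$, and the same contradiction in which $ba^x$ is a proper prefix of $ba^y$; your justification that some selection language must contain infinitely many words of $L$ (via $|x| \ge |z| - \ellC(G')$) spells out a step the paper only asserts.
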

\begin{proof}
	The external contextual grammar $G = (\{a,b\}, \{\{b\}\{a\}^* \to (\lambda, a)\}, \{b\})$ generates the language~$L$. For any integers $m$ and $n$ with $m < n$, the word $b a^m$ is not a suffix of $b a^n$ since the letter $b$ appears exactly once at the very beginning of each word. Thus, the selection language $\{b\}\{a\}^*$ is suffix-free, yielding $L \in \ec{\SUFF}$.
	
	Assume $L \in \ec{\PREF}$. A grammar generating the infinite language $L$ must have a selection language $S \in \PREF$ containing infinitely many words of $L$. Thus, there exist integers $x$ and $y$ with $x < y$ such that $b a^x \in S$ and $b a^y \in S$. Since $b a^y = b a^x a^{y-x}$ and $a^{y-x} \neq \lambda$, the word $b a^x$ is a proper prefix of $b a^y$. This contradicts the assumption that $S \in \PREF$. Therefore, $L \notin \ec{\PREF}$, establishing the assertion.
\end{proof}
\pagebreak

\begin{lemma}\label{lemma:ec_mon_setminus_pref_suff}
	Let $L = \{a\}^*$. Then, it holds $L \in \ec{\MON} \setminus (\ec{\PREF} \cup \ec{\SUFF})$.
\end{lemma}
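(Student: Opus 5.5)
For membership in $\ec{\MON}$ I would take the grammar $G=(\{a\},\{\{a\}^*\to(a,\lambda)\},\{\lambda\})$. Its selection language is $\{a\}^*=V^*$ for $V=\{a\}$, hence monoidal. Every word $a^n$ lies in the selection language, so $\lambda\Lra a\Lra aa\Lra\cdots$ yields exactly $L=\{a\}^*$.

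For the non-membership I would argue as in Lemmas~\ref{lemma:ec_pref_setminus_suff} and~\ref{lemma:ec_suff_setminus_pref}, but uniformly for both families. Suppose $G=(V,\cS,A)$ generates $L$ with every selection language in $\PREF$, or with every selection language in $\SUFF$. Since the derivation is external, a step $x\Lra uxv$ strictly increases length by $|uv|\geq 1$. Hence every word of length greater than $\ellA(G)$ is not an axiom and must be obtained in one step from a shorter word of $L$ lying in some selection language. Since $L$ is infinite and the set $\cS$ of selection pairs is finite, some selection language $S$ must contain infinitely many words of $L$. To see this, note that every word $a^n$ with $n>\ellA(G)$ is derived from a word $a^{n'}\in S_i$ with $n-n'\le\ellC(G)$. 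So the words of $L$ used as selection arguments are unbounded, and by pigeonhole some single $S$ contains infinitely many of them.

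In particular, $S$ contains two words $a^x$ and $a^y$ with $x<y$. Then $a^x$ is both a proper prefix and a proper suffix of $a^y=a^xa^{y-x}=a^{y-x}a^x$. This contradicts $S\in\PREF$ as well as $S\in\SUFF$, so $L\notin\ec{\PREF}\cup\ec{\SUFF}$. (The words in $S$ lie over a subalphabet $U\subseteq V$ with $a\in U$; prefix- and suffix-freeness do not depend on $U$, so this causes no issue.)

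The only step needing care is the claim that infinitely many words of $L$ fall into one selection language. I would justify it by the length-gap argument above rather than merely asserting it as the earlier lemmas did.
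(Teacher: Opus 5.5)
Your proposal is correct and follows essentially the same route as the paper's proof. Both use the grammar $(\{a\},\{\{a\}^*\to(a,\lambda)\},\{\lambda\})$, and both derive the contradiction from two words $a^x,a^y$ with $x<y$ lying in one prefix-free or suffix-free selection language. Your length-gap argument simply makes explicit the pigeonhole step ("some selection language contains infinitely many words of $L$") that the paper only asserts.
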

\begin{proof}
	The external contextual grammar $G = (\{a\}, \{\{a\}^* \to (a, \lambda)\}, \{\lambda\})$ generates the language $L$. Since the only selection language is $\{a\}^* \in \MON$, it holds $L \in \ec{\MON}$.
	
	Assuming $L \in \ec{\PREF} \cup \ec{\SUFF}$, the language $L$ is generated by an external contextual grammar where all selection languages belong entirely to $\PREF$ or entirely to $\SUFF$. Since $L$ is infinite and the grammar has only finitely many selection pairs, at least one selection language $S$ must contain infinitely many words from $L$. Thus, there exist integers $n$ and $m$ with $n < m$ such that $a^n, a^m \in S$. Since $a^n$ is both a proper prefix and a proper suffix of $a^m$, the selection language $S$ is neither prefix-free nor suffix-free. This is a contradiction. Consequently, $L \notin \ec{\PREF} \cup \ec{\SUFF}$, which yields the assertion.
\end{proof}

\begin{lemma}\label{lemma:ec_pref_suff_setminus_pre_suf_inf}
	Let $L_\textit{pre} = \{a, b\} \cup \set{a^n c b}{n \ge 1}$ and $L_\textit{suf} = L_\textit{pre}^R$. Then, it holds 
	\[L_\textit{pre} \in \ec{\PREF} \setminus (\ec{\PRE} \cup \ec{\SUF})\] 
	and 
	\[L_\textit{suf} \in \ec{\SUFF} \setminus (\ec{\PRE} \cup \ec{\SUF}).\]
\end{lemma}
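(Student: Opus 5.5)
The plan is to give an explicit grammar with a prefix-free selection language for membership. For non-membership, I will show that any grammar with prefix-closed or suffix-closed selection languages generating $L_\textit{pre}$ must also produce a word outside $L_\textit{pre}$. The statement for $L_\textit{suf}$ will follow by the mirror-image argument.

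\emph{Membership.} I take the grammar
\[G=(\{a,b,c\},\ \{\{a\}^*\{c\}\{b\}\to(a,\lambda)\},\ \{a,b,acb\}).\]
The axioms $a$ and $b$ lie in no selection language, so they derive nothing. From $acb$ the only possible steps are $a^ncb\Lra a^{n+1}cb$, hence $L(G)=L_\textit{pre}$. The selection language $\{a\}^*\{c\}\{b\}$ is prefix-free: every word in it contains exactly one $c$, directly followed by a final $b$. So if $a^mcb$ is a prefix of $a^ncb$, comparing the positions of $c$ gives $m=n$. This shows $L_\textit{pre}\in\ec{\PREF}$. Symmetrically, the grammar with selection pair $\{b\}\{c\}\{a\}^*\to(\lambda,a)$ and axioms $a$, $b$, $bca$ has a suffix-free selection language and generates $L_\textit{suf}$.

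\emph{Non-membership.} Assume $L_\textit{pre}=L(G)$ for some grammar $G$ whose selection languages are all in $\PRE$, or all in $\SUF$. I choose $n$ with $n+2>\ell(G)$.
\begin{itemize}
\item The word $a^ncb$ is then not an axiom, so it arises in a step $x\Lra uxv=a^ncb$ with $x\in S$ for some selection pair $(S,C)$ and some $(u,v)\in C$.
\item Here $x\in L_\textit{pre}$ and $|x|\ge n+2-\ellC(G)\ge 2$, so $x=a^mcb$ for some $m\ge1$.
\item Comparing with $a^ncb$ forces $v=\lambda$ and $u=a^{n-m}$. Since $uv\ne\lambda$, we get $k:=n-m\ge1$.
\end{itemize}
If $S$ is prefix-closed, then $a\in S$, so $a\Lra a^{k+1}$. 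This is a word of at most one letter type and length at least $2$, so it is not in $L_\textit{pre}$. If $S$ is suffix-closed, then $b\in S$, so $b\Lra a^kb$, which is not in $L_\textit{pre}$ since $k\ge1$ and the word contains no $c$. Both cases contradict $L(G)=L_\textit{pre}$.

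For $L_\textit{suf}$ the same argument with $bca^n$ gives $x=bca^m$, $u=\lambda$ and $v=a^k$ with $k\ge1$. A prefix-closed $S$ then contains $b$ and yields $ba^k\notin L_\textit{suf}$. A suffix-closed $S$ contains $a$ and yields $a^{k+1}\notin L_\textit{suf}$.

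The main point needing care is the length bookkeeping that forces $x$ to be a long word of the form $a^mcb$, rather than an axiom or one of the short words $a$ or $b$. After that, the closure property puts a one-letter word into $S$, and the same context applied to it produces a word outside the language.
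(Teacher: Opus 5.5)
Your proof is correct and follows essentially the same route as the paper. It uses the same grammar $(\{a,b,c\},\{\{a\}^*\{cb\}\to(a,\lambda)\},\{a,b,acb\})$ for membership. For non-membership it uses the same observation: a prefix-closed (suffix-closed) selection language containing some $a^mcb$ must contain $a$ (resp.\ $b$), and the context $(a^k,\lambda)$ then produces $a^{k+1}$ (resp.\ $a^kb$) outside $L_\textit{pre}$, with the mirrored argument for $L_\textit{suf}$. Your choice $n+2>\ell(G)$ also proves explicitly what the paper only asserts, namely that the step producing a long $a^ncb$ must start from some $a^mcb$ and use a context $(a^k,\lambda)$ with $k\ge 1$.
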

\begin{proof}
	The external contextual grammar $G = (\{a,b,c\}, \{\{a\}^* \{c b\} \to (a, \lambda)\}, \{a, b, acb\})$ generates the language $L_{\textit{pre}}$. For any $n \neq m$, the word $a^n c b$ is not a prefix of $a^m c b$. Thus, the only selection language $\{a\}^* \{cb\}$ is prefix-free, yielding $L_{\textit{pre}} \in \ec{\PREF}$.
	
	Assume $L_{\textit{pre}} \in \ec{\PRE} \cup \ec{\SUF}$. Any grammar generating the infinite 
	language $L_{\textit{pre}}$ from a finite set of axioms must have a selection language $S$ 
	containing infinitely many words of the form $a^n c b$. The corresponding context must be 
	of the form $(a^p, \lambda)$ with $p \ge 1$ to 
	generate longer words in $L_{\textit{pre}}$. 
	If $S \in \PRE$, the proper prefix $a$ of $a^n c b$ belongs to $S$. Since $a \in L_{\textit{pre}}$, 
	the grammar applies the context $(a^p, \lambda)$ to $a$, generating the word $a^{p+1} \notin L_{\textit{pre}}$, 
	a contradiction. 
	If $S \in \SUF$, the proper suffix $b$ belongs to $S$. Applying the context $(a^p, \lambda)$ to the 
	word $b \in L_{\textit{pre}}$ generates $a^p b \notin L_{\textit{pre}}$, which is also a contradiction. 
	Since $\INF \subseteq \PRE \cap \SUF$, we obtain $L_{\textit{pre}} \notin \ec{\PRE} \cup \ec{\SUF} \cup \ec{\INF}$.
	
	By symmetry, the reversed language $L_{\textit{suf}} = L_{\textit{pre}}^R$ is generated by an external 
	contextual grammar with the suffix-free selection language $\{b c\}\{a\}^*$ and the context $(\lambda, a)$, 
	yielding $L_{\textit{suf}} \in \ec{\SUFF}$. Assuming $L_{\textit{suf}} \in \ec{\PRE} \cup \ec{\SUF}$ leads 
	to analogous contradictions: a prefix-closed selection language would contain $b$, 
	triggering the generation of the word $b a^p \notin L_{\textit{suf}}$, while a suffix-closed selection 
	language would contain $a$, triggering $a^{p+1} \notin L_{\textit{suf}}$. 
	Thus, $L_{\textit{suf}} \in \ec{\SUFF} \setminus (\ec{\PRE} \cup \ec{\SUF} \cup \ec{\INF})$.
\end{proof}

\begin{lemma}\label{lemma:ec_pref_suff_setminus_nc}
	Let \begin{align*}
		L_1 &= \{a^mbc^{2n}db\mid m \geq 0, n \geq 1\},\\
		L_2 &= \{c^nd\mid n \geq 2\},\\
		L_3 &= \{bc^ndb\mid n \geq 2\},\\
		L_\textit{pre} &= L_1 \cup L_2 \cup L_3 \qmand L_\textit{suf} = L_\textit{pre}^R.
	\end{align*}
	Then, it holds $L_{\textit{pre}} \in \ec{\PREF} \setminus \ec{\NC}$  and $L_{\textit{suf}} \in \ec{\SUFF} \setminus \ec{\NC}$.
\end{lemma}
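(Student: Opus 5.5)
The plan has two parts. First, I construct an external contextual grammar with prefix-free selection languages that generates $L_\textit{pre}$. Second, I prove $L_\textit{pre}\notin\ec{\NC}$ by pumping a selection language. The statement for $L_\textit{suf}$ then follows by reversal.

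\emph{Construction.} Take $G=(\{a,b,c,d\},\cS,\{ccd\})$ with three selection pairs:
\begin{align*}
\{c\}^*\{d\} &\to (c,\lambda), \\
\{cc\}\{c\}^*\{d\} &\to (b,b), \\
\{a\}^*\{b\}\{cc\}^+\{d\}\{b\} &\to (a,\lambda).
\end{align*}
Each selection language is prefix-free. In the first two, every word contains exactly one $d$, and it is the last letter. In the third, every word ends in $db$ and contains only one $d$. A proper extension of such a word would need a letter after the final $d$, or after the $b$ that follows it, and no word of these languages has one.

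From the axiom $ccd$, the first pair produces exactly $L_2$. The second pair turns each word of $L_2$ into the corresponding word of $L_3$. The third pair prepends $a$'s only to words $a^mbc^{2n}db$, whose $c$-block has even length. Its first application is to words of $L_3$ with an even number of $c$'s, so exactly $L_1$ is produced. No selection language contains a word on which its context would lead outside $L_\textit{pre}$, so $L(G)=L_\textit{pre}$.

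\emph{Non-membership.} Suppose $G'$ has non-counting selection languages with a common constant $k$ (take the maximum over all selection languages) and $L(G')=L_\textit{pre}$. Choose $n$ with $2n>\ell(G')+k+2$ and consider $w=abc^{2n}db\in L_1$.

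Since $|w|>\ellA(G')$, the word $w$ is not an axiom. So it arises in a last step $x\Lra uxv=w$ with $x\in S\cap L_\textit{pre}$, a context $(u,v)$, and $|uv|\le\ellC(G')$. I would do a case analysis on which infixes of $w$ lie in $L_\textit{pre}$.
\begin{itemize}
\item[(i)] $x=bc^{2n}db$ with $(u,v)=(a,\lambda)$.
\item[(ii)] $x=c^jd$ with $(u,v)=(abc^{2n-j},b)$. Here $j\ge 2n-\ellC(G')>k$.
\end{itemize}
No other infix of $w$ belongs to $L_\textit{pre}$. In particular, $w$ itself is excluded because $uv\neq\lambda$.

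In case (i), take $y=c$ and absorb the surplus $c$'s into the outer words to get $bc^{2n+1}db\in S$. That word belongs to $L_3$, so it is generated. Applying $(a,\lambda)$ then yields $abc^{2n+1}db\notin L_\textit{pre}$, a contradiction.

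In case (ii), pumping in the same way gives $c^{j+1}d\in S$. That word is in $L_2$ and hence generated. Applying the same context yields $abc^{2n+1}db\notin L_\textit{pre}$, again a contradiction. Hence $L_\textit{pre}\notin\ec{\NC}$.

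\emph{Reversal.} For $L_\textit{suf}$, reverse every selection language and every context, swapping its sides. Reversal exchanges $\PREF$ and $\SUFF$, and it preserves $\NC$, since the defining condition is symmetric under reversal. Reversing a grammar gives a grammar for the reversed language, so both claims transfer.

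The main obstacle is the exhaustive case analysis of the last derivation step. One must make sure that every infix of $w$ lying in $L_\textit{pre}$ has been found, especially those of type $c^jd$ with a nonempty right context. One must also check that the pumped word is genuinely generated by $G'$, which is why pumping into $L_2$ and $L_3$ is needed rather than into $L_1$.
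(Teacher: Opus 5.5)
Your proposal is correct and follows essentially the paper's proof. It uses the same kind of prefix-free grammar from the axiom $ccd$, and the same non-counting pumping of the $c$-block in the selection language used for the step that introduces an $a$, split into the cases $x=bc^{2n}db$ and $x=c^jd$, with the pumped word landing in $L_3$ resp.\ $L_2$. Your version is slightly more precise than the paper's: you fix the witness $abc^{2n}db$, so the left context in the second case is correctly $abc^{2n-j}$ rather than $a^mb$, and you derive the $L_\textit{suf}$ part from the fact that reversal swaps $\PREF$ and $\SUFF$ and preserves $\NC$, where the paper only says ``by symmetry''.
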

\begin{proof}
	First, we show that $L_{\textit{pre}} \in \ec{\PREF} \setminus \ec{\NC}$.
	
	The contextual grammar $G = (\{a,b,c,d\}, \{S_1 \to C_1, S_2 \to C_2\}, \{ccd\})$ with
	\iirule
	{\{c\}^*\{d\}}
	{\{(c, \lambda), (b, b)\}}
	{\{ a^p b c^{2n} d b \mid n \geq 1, p \geq 0 \}}
	{\{(a, \lambda)\}}
	generates the language $L_{\textit{pre}}$ and all selection languages are prefix-free. 
	The selection language $S_1$ is prefix-free since every word ends with its only letter $d$. Similarly, $S_2$ is prefix-free because all its words end with a $b$ that follows the only $d$ in the word, meaning no word in $S_2$ can be a proper prefix of another word in $S_2$.
	
	It can be seen as follows that the language $L_{\textit{pre}}$ is generated. The axiom $ccd$ 
	is in the first selection language $S_1$. If we apply the first rule $(c, \lambda)$ of $C_1$ arbitrarily often, 
	we get the words of the form $c^n d$ for $n \geq 2$. If we apply the second context $(b, b)$ of the first selection component to these words, we get the words $b c^n d b$ for $n \geq 2$. As soon as we use this context, our word begins with $b$ and is no longer in the first selection language.
	
	If the number of letters $c$ in our word is even, meaning the word is of the form $b c^{2n} d b$, it belongs to the second selection language $S_2$. We can then repeatedly apply the rule of the second selection component $(a, \lambda)$ to derive any word of the form $a^m b c^{2n} d b$ for $m \geq 1$. No other words are created in the process. Words with an odd number of $c$'s do not trigger the context in $S_2$. Thus, we see that the grammar generates the language $L_{\textit{pre}}$.
	
	We assume that $L_{\textit{pre}} \in \ec{\NC}$ holds. Then, there is a contextual 
	grammar $G' = (\{a,b,c,d\}, \mathcal{S}', B')$ where all selection languages are non-counting and 
	$L_{\textit{pre}} = L(G')$. 
	
	Since every selection language $S$ is non-counting, there is a natural number $k$ for each language such that, for all words $x, y, z \in V^*$, it holds $x y^k z \in S \iff x y^{k+1} z \in S$. We denote our selection languages by $S^{(k)}$ where $k$ is the smallest natural number in the sense of the definition of non-counting languages.
	
	Furthermore, we define $p = \max\{ k \mid (S^{(k)}, C) \in \mathcal{S}' \}$. Thus, the following statement applies to every selection language $S^{(k)}$: For all words $x, y, z \in V^*$, it holds
	\begin{equation}
		x y^p z = x y^k y^{p-k} z \in S^{(k)} \iff x y^{k+1} y^{p-k} z = x y^{p+1} z \in S^{(k)}. \label{eq:nc_prop}
	\end{equation}
	
	Since the language $L_1$ contains words with an arbitrary even number of letters $c$, there is a derivation
	$$w_0 \stackrel{*}{\implies} w_1 \implies u w_1 v$$
	with $w_0 \in B'$, $u w_1 v \in L_1$, $|w_1|_c > p + \ell(G')$, $|w_1|_a = 0$ and $|uv|_a > 0$. We now distinguish two cases. In the first case, the word $w_1$ begins with letter $b$; in the second case, the word $w_1$ begins with letter $c$.
	
	\textit{Case 1 ($w_1$ starts with $b$):}
	In this case, we have $w_1 = b c^k d b$ with $k > p + \ell(G')$. Since $u w_1 v$ belongs to the language $L_1$, $k$ is even.
	Let $S$ be the selection language used in the derivation step $w_1 \implies u w_1 v$. Since $w_1 \in S$, we obtain according to the relation (\ref{eq:nc_prop})
	$$w_1 = b c^p c^{k-p} d b \in S \iff b c^{p+1} c^{k-p} d b = b c^{k+1} d b \in S.$$
	Since $b c^{k+1} d b$ is in $L_3$ and, thus, also in $L_{\textit{pre}}$, the word $u b c^{k+1} d b v$ is also derived. However, since $k + 1$ is odd and $|uv|_a > 0$, this word is not in $L_{\textit{pre}}$, which is a contradiction to $L_{\textit{pre}} = L(G')$.
	
	\textit{Case 2 ($w_1$ starts with $c$):}
	Then, $w_1 = c^k d$ follows with $k > p + \ell(G')$. Since $u w_1 v \in L_1$ holds, it follows that $u = a^m b$ for a number $m \geq 1$ and $v = b$ (since $|uv|_a > 0$).
	Let $S$ denote the selection language used in the derivation step $w_1 \implies u w_1 v$. Since $w_1 \in S$, we obtain according to the relation (\ref{eq:nc_prop})
	$$w_1 = c^p c^{k-p} d \in S \iff c^{p+1} c^{k-p} d = c^{k+1} d \in S.$$
	Since $c^{k+1} d$ belongs to $L_2$ and, thus, to $L_{\textit{pre}}$, the word $u c^{k+1} d v$ is also derived. 
	Since $|uv|_a > 0$ and the word $u c^k d v$ belongs to $L_1$, the number $k$ is even and $k+1$ is an odd number. 
	Consequently, the derived word $a^m b c^{k+1} d b$ does not belong to the language $L_1$ and also not 
	to $L_{\textit{pre}}$, which is a contradiction to~$L_{\textit{pre}} = L(G')$.
	
	Since all cases lead to a contradiction, the assumption that $L_{\textit{pre}}$ is in $\ec{\NC}$ is false.
	
	By symmetry, an analogous argument holds for the reversed language $L_{\textit{suf}} = L_{\textit{pre}}^R$. 
	A contextual grammar with symmetrically defined suffix-free selection languages generates $L_{\textit{suf}}$. 
	Assuming $L_{\textit{suf}} \in \ec{\NC}$ yields a similar contradiction when pumping the non-counting sequences 
	of letters $c$. Therefore, we also have $L_{\textit{suf}} \in \ec{\SUFF} \setminus \ec{\NC}$.
\end{proof}

\begin{lemma}\label{lemma:ec_pref_suff_not_in_circ_comm}
	Let \[L_{\textit{pre}} = \{ (ab)^n c \mid n \ge 2 \} \cup \{ (ba)^n b c a \mid n \ge 1 \} \text{ and } L_\textit{suf} = L_\textit{pre}^R.\] 
	Then, it holds $L_\textit{pre} \in \ec{\PREF}\setminus \ec{\CIRC}$ and $L_\textit{suf} \in \ec{\SUFF}\setminus \ec{\CIRC}$.
\end{lemma}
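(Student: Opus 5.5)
I would prove the first claim directly and get the second one by reversal. Note first that every word $(ba)^n bca$ equals $b(ab)^nca$. So the second part of $L_\textit{pre}$ arises from the first part by wrapping the context $(b,a)$ around it.

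\emph{Membership in $\ec{\PREF}$.} I take the grammar
\[G=(\{a,b,c\},\{S\to\{(ab,\lambda),(b,a)\}\},\{ababc,\ babca\}),\]
where $S=\set{(ab)^nc}{n\ge 2}$. Every word of $S$ ends with its only occurrence of $c$, so $S$ is prefix-free, and $S$ is clearly regular. The axiom $ababc$ lies in $S$. Repeated use of $(ab,\lambda)$ gives all words $(ab)^nc$ with $n\ge2$. One use of $(b,a)$ then gives $b(ab)^nca=(ba)^nbca$ for $n\ge 2$. These words begin with $b$, so they are not in $S$ and cannot be extended further. The missing word for $n=1$, namely $babca$, is added as an axiom. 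Hence $L(G)=L_\textit{pre}$.

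\emph{Non-membership in $\ec{\CIRC}$.} Assume $L_\textit{pre}=L(G')$ for some grammar $G'$ with circular selection languages. The language is infinite and there are only finitely many axioms, so there is a derivation step $w\Lra uwv$ with $w,uwv\in L_\textit{pre}$, $uv\ne\lambda$ and $w\in S$ for some selection pair $(S,C)$. I check all shapes this step can have.
\begin{itemize}
\item If $w=(ab)^nc$, then $uwv\in L_\textit{pre}$ forces either $(u,v)=((ab)^k,\lambda)$ with $k\ge1$, or $(u,v)=(b,a)$.
\item If $w=(ba)^nbca$, then $v=\lambda$, because $uwv$ must end in $a$ or $c$ and the only possible match is $v=\lambda$. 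Then $u=(ba)^k$ with $k\ge1$.
\end{itemize}
Each case is refuted by a suitable circular shift of $w$, which lies in $S$ because $S$ is circular, and which also lies in $L_\textit{pre}$.
\begin{itemize}
\item For $w=(ab)^nc$ with $n\ge2$, the shift $b(ab)^{n-1}ca=(ba)^{n-1}bca$ is in $S$ and in $L_\textit{pre}$. Applying $((ab)^k,\lambda)$ to it gives a word beginning and ending with $a$. Applying $(b,a)$ gives a word beginning with $bb$. Neither word is in $L_\textit{pre}$.
\item For $w=(ba)^nbca$, the shift $(ab)^{n+1}c$ is in $S$ and in $L_\textit{pre}$. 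Applying $((ba)^k,\lambda)$ to it gives a word beginning with $b$ and ending with $c$, which is not in $L_\textit{pre}$.
\end{itemize}
Each case produces a word outside $L_\textit{pre}$, a contradiction. The main care in this step is the exhaustive listing of the contexts compatible with $w$ and $uwv$ both lying in $L_\textit{pre}$. This is done by comparing first and last letters and the position of the unique $c$.

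\emph{The suffix case.} For $L_\textit{suf}=L_\textit{pre}^R$, I reverse the grammar $G$: reverse the axioms, replace $S$ by $S^R$, and replace each context $(u,v)$ by $(v^R,u^R)$. The reverse of a prefix-free language is suffix-free, so $S^R\in\SUFF$ and $L_\textit{suf}\in\ec{\SUFF}$. Moreover, $\Circ(K^R)=\Circ(K)^R$, so reversing a grammar with circular selection languages gives again such a grammar. Thus $\ec{\CIRC}$ is closed under reversal, and $L_\textit{suf}\in\ec{\CIRC}$ would imply $L_\textit{pre}\in\ec{\CIRC}$, which we have excluded.
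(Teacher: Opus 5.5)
Your membership construction is correct. You use a single prefix-free selection language $\{(ab)^nc\mid n\ge2\}$ with contexts $(ab,\lambda)$ and $(b,a)$, exploiting $(ba)^nbca=b(ab)^nca$. The paper instead adds a second selection language $\{(ba)^nbca\mid n\ge1\}$ with context $(ba,\lambda)$. Your treatment of $L_\textit{suf}$ via $\Circ(K^R)=\Circ(K)^R$ and closure of $\ec{\CIRC}$ under reversal is also sound, and more explicit than the paper's appeal to symmetry.

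However, the case analysis you single out as the delicate step is not exhaustive. Take $w=(ab)^nc$ and $uwv=(ba)^mbca$. Matching the unique $c$ gives $v=a$ and $u(ab)^n=b(ab)^m$, hence $u=(ba)^{m-n}b$ for any $m\ge n$. So every context $((ba)^jb,a)$ with $j\ge0$ is compatible, not just $(b,a)$. For example, $(bab,a)$ turns $ababc$ into $(ba)^3bca\in L_\textit{pre}$. As written, your argument does not refute a derivation step of this shape.

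The gap is easy to close. Applying $((ba)^jb,a)$ to the shifted word $(ba)^{n-1}bca$ gives a word ending in $aa$, which is not in $L_\textit{pre}$.

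More economically, you can bypass the enumeration as the paper does. Pick $N$ such that $(ab)^Nc$ is not an axiom, and look at the last step of its derivation. Every word of $L_\textit{pre}$ contains exactly one $c$, and here it is the final letter. So this step must be $(ab)^nc\Lra(ab)^{N-n}(ab)^nc$. Then only the context $((ab)^k,\lambda)$ has to be refuted, which the shift $(ba)^{n-1}bca$ does, and the case $w=(ba)^nbca$ never arises.
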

\begin{proof}
	The external contextual grammar $G = (\{a,b,c\}, \{S_1 \to C_1, S_2\to C_2\}, \{ababc, babca\})$
	with
	\iirule
	{\set{(ab)^{n} c}{n \ge 2}}
	{\{\left(ab, \lambda\right)\}}
	{\set{(ba)^{n} b c a}{n \ge 1}}
	{\{(ba, \lambda)\}}
	generates the language $L_{\textit{pre}}$. 
	Every word in the first selection language $\set{(ab)^{n} c}{n \ge 2}$ ends with $c$, whereas any of its proper prefixes ends with $a$ or $b$. Thus, it is prefix-free. Similarly, every word in the second selection language $\set{(ba)^{n} b c a}{n \ge 1}$ ends with the subword $ca$. Since the letter $c$ occurs exactly once in each word, no proper prefix can end with $ca$. Therefore, both selection languages are prefix-free, yielding $L_{\textit{pre}} \in \ec{\PREF}$.
	
	Assume $L_{\textit{pre}} \in \ec{\CIRC}$. Any grammar generating $L_{\textit{pre}}$ must use a circular 
	selection language $S \in \CIRC$ containing a word $w_1 = (ab)^n c \in L_{\textit{pre}}$ (with $n \ge 2$) 
	and apply a context $((ab)^m, \lambda)$ to it to derive the longer word $(ab)^{n+m} c$ for a natural 
	number $m \geq 1$. Since $S$ is closed under circular shifts, shifting the first letter $a$ to the 
	end of $w_1$ yields the word $(ba)^{n-1} b c a$, which also belongs to $S$. Since this shifted word 
	is a word in $L_{\textit{pre}}$, 
	the context $((ab)^m, \lambda)$ can be applied to this word, too. This generates the 
	word $(ab)^m(ba)^{n-1} b c a$, which contains the subword $bb$. However, all words in $L_{\textit{pre}}$ 
	consist of alternating letters $a$ and $b$ before the letter $c$. This contradiction 
	implies $L_{\textit{pre}} \notin \ec{\CIRC}$.
	
	By symmetry, the reversed language $L_{\textit{suf}} = L_{\textit{pre}}^R$ is generated by an external 
	contextual grammar with the suffix-free selection languages $\set{c(ba)^{n}}{n \ge 2}$ 
	and $\set{acb(ab)^{n}}{n \ge 1}$, yielding $L_{\textit{suf}} \in \ec{\SUFF}$. 
	Assuming $L_{\textit{suf}} \in \ec{\CIRC}$ forces the application of the context $(\lambda, ba)$ 
	to the circularly shifted \linebreak
	word~$acb(ab)^{n-1} \in L_{\textit{suf}}$, generating a word with the 
	subword $bb$, which is a contradiction. Thus, $L_{\textit{suf}} \in \ec{\SUFF} \setminus \ec{\CIRC}$.
\end{proof}

\begin{lemma}\label{lemma:ec_pref_not_in_star}
	Let $L_{\textit{pre}} = \{\lambda\} \cup \{ a^n b c \mid n \ge 1 \}$ and $L_\textit{suf} = L_\textit{pre}^R$. Then, it holds $L_\textit{pre} \in \ec{\PREF} \setminus \ec{\STAR}$ and $L_\textit{suf} \in \ec{\SUFF} \setminus \ec{\STAR}$.
\end{lemma}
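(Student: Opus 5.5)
For membership in $\ec{\PREF}$, I would use the grammar
\[
G=(\{a,b,c\},\{\{a\}^*\{bc\}\to(a,\lambda)\},\{\lambda,abc\}).
\]
Its single selection language is prefix-free: every word ends with the unique occurrence of $c$, so no word can be a proper prefix of another. Starting from $abc$ and repeatedly prepending $a$ yields exactly the words $a^nbc$ with $n\ge1$. The axiom $\lambda$ is not in the selection language, so it derives nothing further. Hence $L(G)=L_{\textit{pre}}$. Symmetrically, for $L_{\textit{suf}}=\{\lambda\}\cup\set{cba^n}{n\ge1}$ I would use the suffix-free selection language $\{cb\}\{a\}^*$ with context $(\lambda,a)$ and axioms $\lambda$ and $cba$.

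For non-membership in $\ec{\STAR}$, suppose $G'$ is a grammar with star selection languages generating $L_{\textit{pre}}$. Because $L_{\textit{pre}}$ is infinite and the axiom set is finite, some derivation step $w\Longrightarrow uwv$ must occur with $w\neq uwv$ and $w\in S=H^*$ for some selection pair $(S,C)$ with $(u,v)\in C$. Every star language contains $\lambda$, so $\lambda\in S$. Since $\lambda\in L_{\textit{pre}}$ and the selection pair applies to every word of $S$ that occurs in the language, it may be applied to $\lambda$. This yields the word $uv$, which must therefore lie in $L_{\textit{pre}}$.

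The main obstacle is ruling out that $uv\in L_{\textit{pre}}$. The context is nonempty, so $uv\neq\lambda$, and hence $uv=a^kbc$ for some $k\ge1$. Now use the step that was actually applied: $uwv\in L_{\textit{pre}}$ with $w\in L_{\textit{pre}}$ as well.
\begin{itemize}
\item If $w=\lambda$, this gives nothing new.
\item If $w=a^nbc$, then $uwv=ua^nbcv$ must have the form $a^mbc$. This forces $v=\lambda$ and $u=a^j$ with $j\ge1$.
\end{itemize}
So I must choose a step with $w\neq\lambda$. Since derivations from $\lambda$ by contexts with $uv=a^kbc$ produce only finitely many words directly, infinitely many words must arise from steps with $w=a^nbc$. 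For such a step the context is $(a^j,\lambda)$, and applying it to $\lambda\in S$ gives $a^j\notin L_{\textit{pre}}$, a contradiction.

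For $L_{\textit{suf}}$ the argument is symmetric. The applied context is forced to be $(\lambda,a^j)$, and applying it to $\lambda$ gives $a^j\notin L_{\textit{suf}}$.
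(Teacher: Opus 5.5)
Your proposal is correct and takes essentially the same route as the paper. The paper's grammar uses a single prefix-free selection language with context $(a,\lambda)$; your choice $\{a\}^*\{bc\}$ instead of the paper's $\set{a^nbc}{n\ge 1}$ makes no difference. For non-membership, both arguments observe that a star selection language contains $\lambda$, so the context $(a^j,\lambda)$ that must extend some $a^nbc$ can also be wrapped around the generated word $\lambda$, producing $a^j\notin L_{\textit{pre}}$. Your version is slightly more careful than the paper's, since you explicitly justify why a step from a nonempty word must exist and why its context has the form $(a^j,\lambda)$.
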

\begin{proof}
	The external contextual grammar $G = (\{a,b,c\}, \{\set{a^n b c}{n \ge 1} \to (a, \lambda)\}, \{\lambda, abc\})$ 
	generates the language $L_{\textit{pre}}$. Every word in the selection language $\set{a^n b c}{n\ge 1}$ ends with $c$, while any proper prefix ends with $a$ or $b$. Thus, the selection language is prefix-free, yielding $L_{\textit{pre}} \in \ec{\PREF}$.
	
	Assume $L_{\textit{pre}} \in \ec{\STAR}$. Any grammar generating the infinite language $L_{\textit{pre}}$ 
	must use a star selection language $S \in \STAR$ to extend words of the form $a^n b c$. 
	To generate longer words in $L_{\textit{pre}}$, the corresponding context must be $(a^k, \lambda)$ 
	for some $k \ge 1$. By definition of the Kleene star, every star language contains the empty word, 
	so $\lambda \in S$. Furthermore, since $\lambda \in L_{\textit{pre}}$, the context $(a^k, \lambda)$ 
	can be applied to this word, generating the 
	word $a^k$. Since $k \ge 1$, it holds $a^k \neq \lambda$ and $a^k \notin L_{\textit{pre}}$, which is a contradiction.
	Therefore, $L_{\textit{pre}} \notin \ec{\STAR}$.
	
	By symmetry, the reversed language $L_{\textit{suf}} = L_{\textit{pre}}^R$ is generated by an external 
	contextual grammar with the suffix-free selection language $\set{cba^n}{n \ge 1}$ and the context $(\lambda, a)$, 
	yielding $L_{\textit{suf}} \in \ec{\SUFF}$. Assuming $L_{\textit{suf}} \in \ec{\STAR}$ leads to a similar 
	contradiction: a selection language $S \in \STAR$ contains $\lambda$, allowing the application of the 
	corresponding context $(\lambda, a^k)$ to the axiom $\lambda$. 
	This generates the word $a^k \notin L_{\textit{suf}}$. Thus, $L_{\textit{suf}} \in \ec{\SUFF} \setminus \ec{\STAR}$.
\end{proof}

\begin{lemma}\label{lemma:ec_pref_not_in_sydef}
	Let $L_{\textit{pre}} = \{ a^n d \mid n \ge 0 \} \cup \{ c a^n d \mid n \ge 0 \} \cup \{ a^n d a^m d \mid n, m \ge 0 \}$.
	Then, it holds 
	\[L_{\textit{pre}} \in \ec{\PREF} \setminus \ec{\SYDEF}.\]
\end{lemma}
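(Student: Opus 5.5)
The plan has two parts: give a grammar whose selection languages are prefix-free, and then show that no grammar with symmetric definite selection languages generates $L_{\textit{pre}}$. For the first part, take
\[G=(\{a,b,c,d\},\{S_1\to C_1,\ S_2\to C_2\},\{d\})\]
with
\iirule
{\{a\}^*\{d\}}
{\{(a,\lambda),(c,\lambda),(d,\lambda)\}}
{\{a\}^*\{d\}\{a\}^*\{d\}}
{\{(a,\lambda)\}}
Prefix-freeness of $S_1$ is immediate, since each word ends with its only $d$. In $S_2$ every word contains exactly two letters $d$, the last of them at the end. A proper prefix has at most one $d$, so it cannot belong to $S_2$.

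To see that $G$ generates exactly $L_{\textit{pre}}$, I would trace the derivations from the axiom $d$:
\begin{itemize}
\item $(a,\lambda)$ on $S_1$ yields all words $a^nd$;
\item $(c,\lambda)$ yields $ca^nd$, which lies in neither $S_1$ nor $S_2$, so it is terminal;
\item $(d,\lambda)$ yields $da^md$, which lies in $S_2$ but no longer in $S_1$;
\item $(a,\lambda)$ on $S_2$ then yields all $a^nda^md$, which stay in $S_2$.
\end{itemize}
No other words arise, so $L_{\textit{pre}}\in\ec{\PREF}$.

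For the non-membership, assume that $L_{\textit{pre}}=L(G')$ for a grammar $G'$ whose selection languages are all in $\SYDEF$. Since $\{ca^nd\}$ is infinite and there are finitely many axioms, some word $ca^nd$ with $n>\ell(G')$ arises in a step $w\Lra uwv$ with $w\in L_{\textit{pre}}$. I would exclude the alternatives as follows.
\begin{itemize}
\item If $u=\lambda$, then $v\neq\lambda$ is a suffix ending in $d$. Then $w$ is a prefix $ca^i$, which does not lie in $L_{\textit{pre}}$.
\item Otherwise $u$ starts with $c$. Then $u=ca^j$ and $v=\lambda$, because $w\neq\lambda$ must contain the single $d$ and $w$ cannot start with $c$.
\end{itemize}
Hence $w=a^kd$, and some selection language $S=EU^*H$ with $d\in U$ contains $a^kd$ and has the context $(ca^j,\lambda)$.

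Write $a^kd=exh$ with $e\in E$, $x\in U^*$ and $h\in H$. Then $e=a^i$, and $h$ is either $a^rd$ or $\lambda$. If $h=\lambda$, I would instead use $x$ to absorb the suffix and take $h$ as is; the simplest approach is to treat both cases uniformly by choosing the middle part freely. In the main case $h=a^rd$, the word $a^i\,d\,a^rd=e\,d\,h$ lies in $EU^*H=S$ and also in $L_{\textit{pre}}$, so it is derivable. Applying $(ca^j,\lambda)$ to it gives $ca^{j+i}da^rd\notin L_{\textit{pre}}$, which is a contradiction. In the case $h=\lambda$ we have $d\in H$? No: $h=\lambda\in H$, so $e\,x'\,h$ for any $x'\in U^*$ lies in $S$. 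Taking $x'=dd$ gives $a^idd\in L_{\textit{pre}}\cap S$, and extending it gives $ca^{j+i}dd\notin L_{\textit{pre}}$.

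I expect the main obstacle to be the careful case analysis of how $ca^nd$ can be produced, in particular ruling out $u=\lambda$ and pinning down $u=ca^j$. After that, the exploitation of the symmetric definite form is straightforward: because the middle part $U^*$ of $S$ can be chosen freely, inserting an extra $d$ produces a selected word of the form $a^*da^*d$. That word may not receive a $c$.
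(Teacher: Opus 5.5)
Your overall route is the paper's. You use the same prefix-free grammar, and the same idea for the lower bound: some $ca^nd$ must arise from a selected word $a^kd$ via a context $(ca^j,\lambda)$, and the free middle part $U^*$ of $S=EU^*H$ then lets you insert an extra $d$. Your explicit exclusion of $u=\lambda$ is more careful than the paper's one-line justification. Inserting only letters $d$ also avoids needing $a\in U$.

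There is a hole in the factorization step, however. From $a^kd=exh$ you conclude $e=a^i$. This misses the configuration $e=a^kd$, $x=h=\lambda$, which can occur (e.g.\ $S=\{a^kd\}U^*$).

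In that configuration your uniform choice $x'=dd$ produces $a^kddd$. This word is not in $L_{\textit{pre}}$, so it need not be generated, and no contradiction follows. The repair is to take $x'=d$ there. Since $d\in U$ and $\lambda\in H$, the word $a^kdd=a^kda^0d$ lies in $S\cap L_{\textit{pre}}$, hence is generated. The context $(ca^j,\lambda)$ then yields $ca^{j+k}dd\notin L_{\textit{pre}}$. This is exactly the paper's Subcase~1a.

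The correct general rule is: insert one $d$ if $e$ or $h$ already carries the letter $d$ of $a^kd$, and two letters $d$ if $x$ carries it. With that subcase added, your proof is complete.
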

\begin{proof}
	First, we show that $L_{\textit{pre}} \in \ec{\PREF}$. The external contextual grammar $G = (\{a,c,d\}, \cS, \{d\})$ with the selection pairs $\cS = \{(S_1, C_1), (S_2, C_2)\}$ defined as
	\iirule
	{\set{a^n d}{n \ge 0}}
	{\{(a, \lambda), (c, \lambda), (d, \lambda)\}}
	{\set{a^m d a^n d}{m\ge 0,\ n\ge 0}}
	{\{(a, \lambda)\}}
	generates $L_{\textit{pre}}$. Every word in $S_1$ ends with exactly one $d$. Every word in $S_2$ contains exactly two occurrences of the letter $d$ and ends with the second one. Thus, no word in $S_1$ or $S_2$ can be a proper prefix of another word within the same selection language, yielding $S_1, S_2 \in \PREF$.
	
	Starting from the axiom $d$, applying $(a, \lambda)$ from $C_1$ generates the base set $a^* d$. 
	From any word $a^n d \in S_1$, applying $(c, \lambda)$ generates $c a^n d$. Since these words start with $c$, 
	they neither belong to  $S_1$ nor $S_2$, halting their derivation. Applying $(d, \lambda)$ from $C_1$ to $a^n d$ 
	generates $d a^n d$, which belongs to $S_2$. Applying $(a, \lambda)$ from $C_2$ iteratively to words in $S_2$ 
	generates $a^m d a^n d$. Since these words have two letters $d$, they do not belong to $S_1$. 
	Thus, $G$ generates $L_{\textit{pre}}$, yielding $L_{\textit{pre}} \in \ec{\PREF}$.
	
	Assume that $L_{\textit{pre}} \in \ec{\SYDEF}$ via some grammar $G' = (V, \cS', A')$. In order to generate the infinite subset $\{ c a^n d \mid n \ge 0 \}$, $G'$ must apply contexts to shorter words from $L_{\textit{pre}}$. Since external contexts only wrap words, and no word in this subset can be derived from another within it, infinitely many $c a^n d$ must be derived from $z = a^m d$ using the context $(c a^{n-m}, \lambda)$ for some $m < n$.
	
	This requires a selection pair $(S, C) \in \cS'$ with $a^m d \in S$ and $(c a^{n-m}, \lambda) \in C$. 
	Since $S \in \SYDEF$, it holds $S = E V_S^* H$ for an alphabet $V_S$ and regular languages $E\subseteq V_S^*$
	and $H\subseteq V_S^*$. Since $a^m d \in S$, the alphabet $V_S$ contains at least the letters $a$ and $d$. 
	The word $a^m d$ factors as $e x h$ with $e \in E$ and $h \in H$. This restricts the factorization to the following 
	three cases:
	
	\textit{Case 1: $h = \lambda$.}
	Then $e x = a^m d$. The letter $d$ is either in $e$ or in $x$.
	\begin{itemize}
		\item \textit{Subcase 1a ($d$ is in $e$):} Then $e = a^m d$ and $x = \lambda$. 
		Thus, $\sets{a^m d} V_S^* \subseteq S$. Since $d \in V_S^*$, the word $w = a^m d d$ belongs to $S$. 
		Since $w \in L_{\textit{pre}}$, it is a word generated by $G'$. Applying the context 
		gives $c a^{n-m} a^m d d = c a^n d d \notin L_{\textit{pre}}$, which is a contradiction.
		\item \textit{Subcase 1b ($d$ is in $x$):} Then $e = a^i$ for some $i \le m$. Thus, $a^i V_S^* \subseteq S$. 
		Since $d a d \in V_S^*$, the word $w = a^i d a d$ belongs to $S$. 
		Since $w \in L_{\textit{pre}}$, it is generated by $G'$. Applying the context 
		gives $c a^{n-m} a^i d a d \notin L_{\textit{pre}}$, which is a contradiction.
	\end{itemize}
	
	\textit{Case 2: $h = d$.}
	Then $e x = a^m$, meaning $e = a^i$ for some $i \le m$. Thus, $a^i V_S^* d \subseteq S$. 
	Since $d a \in V_S^*$, the word $w = a^i (d a) d = a^i d a d$ belongs to $S$. 
	Since it belongs to $L_{\textit{pre}}$, it is generated by $G'$. Applying the 
	context gives $c a^{n-m} a^i d a d \notin L_{\textit{pre}}$, which is a contradiction.
	
	\textit{Case 3: $h = a^j d$ for some $j \ge 1$.}
	Then $e x = a^{m-j}$, meaning $e = a^i$ for some $i \le m-j$. Thus, $a^i V_S^* a^j d \subseteq S$. Since $d \in V_S^*$, the word $w = a^i d a^j d$ belongs to $S$. 
	Since it belongs to $L_{\textit{pre}}$, it is generated by $G'$. Applying the context 
	gives $c a^{n-m} a^i d a^j d \notin L_{\textit{pre}}$, which is a contradiction.
	
	Since all possible factorizations inevitably cause $G'$ to apply the context to a word of $L_{\textit{pre}}$ 
	with two letters $d$, 
	it generates a word starting with $c$ but containing two $d$'s. Since $L_{\textit{pre}}$ restricts words starting 
	with $c$ to have exactly one $d$, a word is generated which does not belong to $L_{\textit{pre}}$. 
	Thus, $L_{\textit{pre}} \notin \ec{\SYDEF}$.
\end{proof}

\begin{lemma}\label{lemma:ec_sydef_closed_under_reversal}
	The language family $\ec{\SYDEF}$ is closed under reversal. Formally, if $L \in \ec{\SYDEF}$, 
	then $L^R \in \ec{\SYDEF}$.
\end{lemma}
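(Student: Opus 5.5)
Given a contextual grammar $G=(V,\cS,A)$ with symmetric definite selection languages and $L=L(G)$, I would build the reversed grammar
\[G^R=(V,\set{S^R\to C^R}{(S,C)\in\cS},A^R), \qquad C^R=\set{(v^R,u^R)}{(u,v)\in C},\]
and show two things: every selection language $S^R$ is again symmetric definite, and $L(G^R)=L^R$.

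For the first point, let $S=EU^*H$ with regular $E,H\subseteq U^*$ over its alphabet $U$. Then
\[S^R=H^R(U^*)^RE^R=H^RU^*E^R,\]
since $(U^*)^R=U^*$. Regular languages are closed under reversal, so $H^R$ and $E^R$ are regular. Hence $S^R$ has the required form with respect to the same alphabet $U$. Each context $(v^R,u^R)$ still satisfies $v^Ru^R\neq\lambda$, and $C^R$ and $A^R$ are finite, so $G^R$ is a valid contextual grammar with selection in $\SYDEF$.

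For the second point, I would show by induction on the length of derivations that $x\Lra_\mathrm{ex} y$ in $G$ holds if and only if $x^R\Lra_\mathrm{ex} y^R$ in $G^R$. This uses the fact that $x\in S$ iff $x^R\in S^R$, and that $(uxv)^R=v^Rx^Ru^R$. Together with the reversed axiom set, this yields $L(G^R)=L^R$. The same equivalence applied to $G^R$ gives the converse, although the statement only needs one direction.

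The only point needing care is the first one: confirming that the alphabet convention (the selection language is taken over some subset $U$, and $\SYDEF$ is defined with respect to $U$) survives reversal. It does, because reversal preserves the alphabet. I expect no real obstacle beyond this; the rest is routine bookkeeping.
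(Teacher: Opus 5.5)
Your proposal is correct and follows the paper's proof essentially step for step: the same reversed grammar $G^R$ with reversed axioms, selection languages $S^R = H^R U^* E^R$, and contexts $(v^R,u^R)$, followed by the correspondence $x \Lra_\mathrm{ex} y$ in $G$ if and only if $x^R \Lra_\mathrm{ex} y^R$ in $G^R$. Your extra checks, that the reversed contexts stay non-empty and that the alphabet convention survives reversal, are details the paper leaves implicit.
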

\begin{proof}
	Let $L \in \ec{\SYDEF}$. By definition, there exists an external contextual grammar $G = (V, \cS, A)$ such that $L_{\textit{ex}}(G) = L$, and all selection languages in $\cS$ belong to $\SYDEF$. Let $\cS = \{(S_1, C_1), \dots, (S_k, C_k)\}$.
	
	We construct a new external contextual grammar $G^R = (V, \cS^R, A^R)$ to generate the reversed language $L^R$. We define the components of $G^R$ as follows:
	\begin{itemize}
		\item $A^R = \{ w^R \mid w \in A \}$ is the set of reversed axioms.
		\item For each selection pair $(S_i, C_i) \in \cS$, we create a reversed selection pair $(S_i^R, C_i^R) \in \cS^R$, where:
		\begin{itemize}
			\item $S_i^R = \{ w^R \mid w \in S_i \}$,
			\item $C_i^R = \{ (v^R, u^R) \mid (u, v) \in C_i \}$.
		\end{itemize}
	\end{itemize}
	
	First, we must verify that all selection languages in $G^R$ belong to $\SYDEF$. Let $S \in \SYDEF$ be a selection language from $G$. By definition, $S$ can be represented as $S = G_S V_S^* H_S$ for some regular languages $G_S$ and $H_S$ over an alphabet $V_S \subseteq V$. The reversal of $S$ is given by:
	\[ S^R = (G_S V_S^* H_S)^R = H_S^R (V_S^*)^R G_S^R = H_S^R V_S^* G_S^R. \]
	Since the family of regular languages is closed under reversal, both $H_S^R$ and $G_S^R$ are regular languages. 
	Therefore, $S^R$ matches the definition of a symmetric definite language, yielding $S^R \in \SYDEF$.
	
	Next, we show the equality $L_{\textit{ex}}(G^R) = L^R$. We prove this by showing that $w \in L_{\textit{ex}}(G)$ if and only if $w^R \in L_{\textit{ex}}(G^R)$.
	This equivalence holds trivially for the base cases, since $w \in A \iff w^R \in A^R$. 
	For the derivation steps, suppose $w_{j+1}$ is derived from $w_j$ in $G$. This means there is a selection pair $(S_i, C_i) \in \cS$ such that $w_j \in S_i$ and $w_{j+1} = u w_j v$ for some $(u, v) \in C_i$. 
	By our construction of $G^R$, the reversed word $w_j^R$ belongs to $S_i^R$, and the pair $(v^R, u^R)$ belongs to $C_i^R$. Applying this context to $w_j^R$ in $G^R$ yields:
	\[ v^R w_j^R u^R = (u w_j v)^R = w_{j+1}^R. \]
	Thus, every derivation step $w_j \implies w_{j+1}$ in $G$ corresponds exactly to a derivation step $w_j^R \implies w_{j+1}^R$ in $G^R$, and vice versa. 
	
	Consequently, $G^R$ generates exactly all reversed words of $L$. Since $G^R$ is an external contextual grammar 
	with selection in $\SYDEF$, it follows that $L^R = L_{\textit{ex}}(G^R) \in \ec{\SYDEF}$, concluding the proof.
\end{proof}

\begin{lemma}\label{lemma:ec_suff_not_in_sydef}
	Let 
        $L_{\textit{suf}} = L_{\textit{pre}}^R = \set{ d a^n }{ n \ge 0 } \cup \set{ d a^n c }{ n \ge 0 } 
        \cup \set{ d a^m d a^n }{ n, m \ge 0 }$ 
        with $L_{\textit{pre}}$ from Lemma \ref{lemma:ec_pref_not_in_sydef}.
	Then, it holds $L_{\textit{suf}} \in \ec{\SUFF} \setminus \ec{\SYDEF}$.
\end{lemma}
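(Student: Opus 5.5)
The statement has two parts: membership of $L_{\textit{suf}}$ in $\ec{\SUFF}$, shown by an explicit grammar, and non-membership in $\ec{\SYDEF}$, obtained from Lemmas~\ref{lemma:ec_pref_not_in_sydef} and~\ref{lemma:ec_sydef_closed_under_reversal}.

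For membership, we mirror the grammar $G$ from the proof of Lemma~\ref{lemma:ec_pref_not_in_sydef}. We take $G^R=(\{a,c,d\},\{S_1^R\to C_1^R, S_2^R\to C_2^R\},\{d\})$ with
\begin{align*}
  S_1^R &= \set{da^n}{n\ge 0}, & S_2^R &= \set{da^nda^m}{m,n\ge 0},\\
  C_1^R &= \{(\lambda,a),(\lambda,c),(\lambda,d)\}, & C_2^R &= \{(\lambda,a)\}.
\end{align*}
The construction in the proof of Lemma~\ref{lemma:ec_sydef_closed_under_reversal} does not actually use the $\SYDEF$ property in its derivation-correspondence part. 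It shows, for any external contextual grammar, that reversing the axioms, the selection languages, and the contexts yields a grammar generating the reversed language. Hence $L_{\ex}(G^R)=L_{\textit{pre}}^R=L_{\textit{suf}}$.

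It then remains to check that $S_1^R$ and $S_2^R$ are suffix-free. Every word of $S_1^R$ begins with its unique letter $d$, so a proper suffix of such a word contains no $d$ and cannot lie in $S_1^R$. Every word of $S_2^R$ contains exactly two letters $d$ and begins with one of them, so a proper suffix contains at most one $d$ and cannot lie in $S_2^R$. (Equivalently, a language is suffix-free if and only if its reversal is prefix-free, and $S_1,S_2$ were shown to be prefix-free.) Therefore $L_{\textit{suf}}\in\ec{\SUFF}$.

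For non-membership, we argue by contradiction. Suppose $L_{\textit{suf}}\in\ec{\SYDEF}$. By Lemma~\ref{lemma:ec_sydef_closed_under_reversal}, then $L_{\textit{suf}}^R=L_{\textit{pre}}\in\ec{\SYDEF}$, which contradicts Lemma~\ref{lemma:ec_pref_not_in_sydef}.

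The only delicate step is the generation claim $L_{\ex}(G^R)=L_{\textit{suf}}$. We plan to justify it either by appealing to the reversal correspondence as above, or by redoing the short derivation analysis from Lemma~\ref{lemma:ec_pref_not_in_sydef} mirrored. In that analysis, words $da^nc$ are dead ends, and words with two letters $d$ only receive letters $a$ on the right.
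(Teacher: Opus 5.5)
Your proposal is correct and follows the paper's proof. It uses the same mirrored grammar with $\set{da^n}{n\ge 0}\to\{(\lambda,a),(\lambda,c),(\lambda,d)\}$ and $\set{da^nda^m}{m,n\ge 0}\to\{(\lambda,a)\}$, and it argues suffix-freeness the same way (the unique, respectively first, letter $d$ stands at the start). It also reaches the same contradiction through Lemma~\ref{lemma:ec_sydef_closed_under_reversal} and Lemma~\ref{lemma:ec_pref_not_in_sydef}. Your observation that the derivation correspondence in the reversal lemma does not depend on the $\SYDEF$ property justifies $L(G^R)=L_{\textit{suf}}$ a bit more rigorously than the paper's appeal to ``structural symmetry''.
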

\begin{proof}
	First, we show that $L_{\textit{suf}} \in \ec{\SUFF}$. The external contextual grammar $G = (\{a,c,d\}, \cS, \{d\})$ with the selection pairs $\cS = \{(S_1, C_1), (S_2, C_2)\}$ defined as
	\iirule
	{\set{ d a^n }{ n \ge 0 }}
	{\{(\lambda, a), (\lambda, c), (\lambda, d)\}}
	{\set{ d a^m d a^n }{ n\ge 0,\ m \ge 0 }}
	{\{(\lambda, a)\}}
	generates $L_{\textit{suf}}$. Every word in $S_1$ begins with exactly one $d$. 
	Every word in $S_2$ contains exactly two occurrences of the letter $d$ and begins with the first one. 
	Thus, no word in $S_1$ or $S_2$ can be a proper suffix of another word within the same language, 
	yielding $S_1, S_2 \in \SUFF$. By structural symmetry to Lemma~\ref{lemma:ec_pref_not_in_sydef}, 
	the contexts independently append the respective mirrored characters to the right side, 
	generating the language~$L_{\textit{suf}}$. Thus, $L_{\textit{suf}} \in \ec{\SUFF}$.
	
	Assume that $L_{\textit{suf}} \in \ec{\SYDEF}$ via some grammar $G'$. The property of a language 
	belonging to the class~$\ec{\SYDEF}$ is closed under word reversal because the reverse of any symmetric 
	definite selection language $S = G V^* H$ is $S^R = H^R V^* G^R$, which is again symmetric 
	definite (\cite{Olejar_Szabari.2023}). Since external contextual derivations are completely
	symmetric, with Lemma \ref{lemma:ec_sydef_closed_under_reversal}, it follows that
	$L_{\textit{suf}} \in \ec{\SYDEF}$ implies~$L_{\textit{suf}}^R \in \ec{\SYDEF}$.
	
	Since $L_{\textit{suf}}^R = (L_{\textit{pre}}^R)^R = L_{\textit{pre}}$, this would imply that $L_{\textit{pre}} \in \ec{\SYDEF}$. However, as established in Lemma \ref{lemma:ec_pref_not_in_sydef}, $L_{\textit{pre}} \notin \ec{\SYDEF}$. From this contradiction, it follows that $L_{\textit{suf}} \notin \ec{\SYDEF}$.
\end{proof}

We now prove an equivalence and some inclusion relations.


\begin{lemma}\label{lemma:ec_inff_is_fin}
	The family of languages generated by external contextual grammars with infix-free selection languages coincides with the family of finite languages. Formally, $\ec{\INFF} = \FIN$.
\end{lemma}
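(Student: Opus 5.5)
The plan is to prove the two inclusions $\FIN\subseteq\ec{\INFF}$ and $\ec{\INFF}\subseteq\FIN$ separately. The first is immediate; the second rests on the observation that external derivations only ever produce words containing their ancestors as proper infixes.

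\emph{The inclusion $\FIN\subseteq\ec{\INFF}$.} Let $L\subseteq V^*$ be finite. Take the grammar $G=(V,\{\emptyset\to(a,\lambda)\},L)$ for some letter $a\in V$. The empty language is regular and trivially infix-free, so $\emptyset\in\INFF$. No context can ever be applied, hence $L_\mathrm{ex}(G)=L$.

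\emph{The inclusion $\ec{\INFF}\subseteq\FIN$.} Let $G=(V,\{S_1\to C_1,\dots,S_n\to C_n\},A)$ be a grammar with $S_i\in\INFF$ for all $i$. Consider an arbitrary derivation
\[w_0\Lra w_1\Lra\cdots\Lra w_t\]
with $w_0\in A$, where step $j$ uses a selection pair $(S_{i_j},C_{i_j})$ with $w_{j-1}\in S_{i_j}$.

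We first note that every context satisfies $uv\neq\lambda$. So each $w_j$ is obtained from $w_{j-1}$ by wrapping a nonempty context around it. By induction, for $j<k$ the word $w_j$ is a proper infix of $w_k$.

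The key claim is that no selection language is used twice along the derivation. Suppose $i_j=i_k=i$ for some $j<k$. Then $w_{j-1}$ and $w_{k-1}$ both lie in $S_i$, and $w_{j-1}$ is a proper infix of $w_{k-1}$. This contradicts $S_i\in\INFF$. Hence $t\le n$.

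Consequently every generated word has length at most $\ellA(G)+n\cdot\ellC(G)$. There are only finitely many words over $V$ of bounded length, so $L_\mathrm{ex}(G)$ is finite.

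The only delicate point is the properness of the infix relation. It uses exactly the requirement $uv\neq\lambda$ on contexts; without it, infix-freeness would not exclude reusing a selection language. Everything else is routine counting. Combining both inclusions gives $\ec{\INFF}=\FIN$, and together with $\ec{\FIN}=\FIN$ this also yields the equality $\ec{\FIN}=\ec{\INFF}$ shown in Figure~\ref{fig:ec_erg}.
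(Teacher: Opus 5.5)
Your proof is correct and follows essentially the same argument as the paper. For $\FIN\subseteq\ec{\INFF}$ you use a trivial grammar; for the converse you argue that no selection pair can serve twice along a derivation, because earlier words are proper infixes of later ones (which uses $uv\neq\lambda$), so derivation length is bounded by the number of selection pairs. The only cosmetic difference is that the paper uses an empty set of selection pairs where you use the dummy pair $\emptyset\to(a,\lambda)$; both are valid.
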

\begin{proof}
	Every finite language $L$ over an alphabet $V$ is generated by the external contextual grammar 
	$G = (V, \emptyset, L)$. Since the set of selection rules is empty, the condition is satisfied, yielding $\FIN \subseteq \ec{\INFF}$.
	
	Conversely, let $G = (V, \cS, A)$ be an external contextual grammar with $k$ selection pairs 
	where all selection languages belong to $\INFF$. Assume $L_{ex}(G)$ is infinite. Since the set
	of axioms $A$ and all context sets are finite, the grammar must produce arbitrarily long 
	derivation sequences. In any derivation sequence with more than $k$ steps, the Pigeonhole Principle 
	implies that at least one selection language $S$ is used in two distinct steps $i$ and $j$ (with $i < j$). 
	Thus, the used words $w_i$ and $w_j$ both belong to $S$. Since external contextual derivations increase 
	the word length, $w_i$ is a proper infix of $w_j$. This contradicts $S \in \INFF$. Therefore, the length 
	of any derivation sequence is bounded by $k$. Expanding a finite set of axioms at most $k$ times 
	using finite sets of contexts yields only a finite set of words. Thus, $L_{ex}(G)$ is finite, 
	yielding $\ec{\INFF} \subseteq \FIN$.
\end{proof}

\begin{lemma}\label{lemma:ec_pref_suff_subset_ps}
	The proper inclusions $\ec{\PREF} \subset \ec{\PS}$ and $\ec{\SUFF} \subset \ec{\PS}$ hold.
\end{lemma}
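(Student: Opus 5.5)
The inclusions $\ec{\PREF} \subseteq \ec{\PS}$ and $\ec{\SUFF} \subseteq \ec{\PS}$ follow immediately from Lemma~\ref{lemma:pref_suff_subset_ps} together with the monotonicity stated in Lemma~\ref{lemma:ec_monoton}. By Lemma~\ref{lemma:pref_suff_subset_ps} we have $\PREF \subseteq \PS$ and $\SUFF \subseteq \PS$. Hence every external contextual grammar whose selection languages are all prefix-free, or all suffix-free, is also a grammar with power-separating selection languages. It therefore generates a language in $\ec{\PS}$.

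For properness, I would use a single witness for both inclusions, namely $L = \{a\}^*$ from Lemma~\ref{lemma:ec_mon_setminus_pref_suff}. That lemma gives $L \in \ec{\MON} \setminus (\ec{\PREF} \cup \ec{\SUFF})$. Since $\MON \subseteq \PS$ (the path $\MON \subset \NIL \subset \DEF \subset \ORD \subset \NC \subset \PS$ in Figure~\ref{fig:lang_erg_1}), monotonicity again yields $\ec{\MON} \subseteq \ec{\PS}$. So $L \in \ec{\PS}$, and $L \notin \ec{\PREF}$ and $L \notin \ec{\SUFF}$, which establishes both strict inclusions. Alternatively, for the first inclusion one could take the language $\{b\}\{a\}^*$ from Lemma~\ref{lemma:ec_suff_setminus_pref}, which lies in $\ec{\SUFF} \subseteq \ec{\PS}$ but not in $\ec{\PREF}$, and symmetrically $\{a\}^*\{b\}$ from Lemma~\ref{lemma:ec_pref_setminus_suff} for the second inclusion.

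There is no real obstacle here. The only point requiring care is to cite the chain of inclusions $\MON \subseteq \PS$ correctly; the direct grammar $(\{a\},\{\{a\}^*\to(a,\lambda)\},\{\lambda\})$ already uses a monoidal, hence power-separating, selection language.
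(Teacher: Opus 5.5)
Your proof is correct and follows the paper's argument essentially verbatim. The inclusions come from $\PREF \subseteq \PS$, $\SUFF \subseteq \PS$ and monotonicity, and properness is witnessed by $\{a\}^* \in \ec{\MON} \setminus (\ec{\PREF} \cup \ec{\SUFF})$ together with $\ec{\MON} \subseteq \ec{\PS}$; your alternative witnesses $\{b\}\{a\}^*$ and $\{a\}^*\{b\}$ (using the inclusions into $\ec{\PS}$ you had already established) would also work.
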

\begin{proof}
	First, we show the inclusions $\ec{\PREF} \subseteq \ec{\PS}$ and $\ec{\SUFF} \subseteq \ec{\PS}$. 
	Since the inclusions $\PREF \subseteq \PS$ and $\SUFF \subseteq \PS$ hold for the families of selection languages, any external contextual grammar with selection languages in $\PREF$ or $\SUFF$ is also an external contextual grammar with selection languages in $\PS$. Thus, every language generated by the former can also be generated by the latter, yielding $\ec{\PREF} \subseteq \ec{\PS}$ and $\ec{\SUFF} \subseteq \ec{\PS}$.
	
	To show that these inclusions are proper, we consider the language $L = \{a\}^*$ from Lemma \ref{lemma:ec_mon_setminus_pref_suff}. 
	As established in that lemma, $L \in \ec{\MON}$. Since $\MON \subseteq \PS$, it follows that $\ec{\MON} \subseteq \ec{\PS}$, which yields $L \in \ec{\PS}$. 
	Furthermore, Lemma \ref{lemma:ec_mon_setminus_pref_suff} proves that $L \notin \ec{\PREF} \cup \ec{\SUFF}$. 
	
	Consequently, $L \in \ec{\PS} \setminus \ec{\PREF}$ and $L \in \ec{\PS} \setminus \ec{\SUFF}$, proving that the inclusions are proper.
\end{proof}

\begin{corollary}\label{cor:ec_inclusions}
	The proper inclusions $\ec{\INFF} \subset \ec{\PREF}$ and $\ec{\INFF} \subset \ec{\SUFF}$ hold.
\end{corollary}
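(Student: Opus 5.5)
The plan is to derive both inclusions from the monotonicity lemma and the language-level inclusions, and then to establish properness with the witnesses already constructed.

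\emph{Inclusions.} By Lemma~\ref{lemma:inff_subset_pref_suf}, we have $\INFF \subseteq \PREF$ and $\INFF \subseteq \SUFF$. Lemma~\ref{lemma:ec_monoton} then gives $\ec{\INFF} \subseteq \ec{\PREF}$ and $\ec{\INFF} \subseteq \ec{\SUFF}$ directly. As an alternative route, Lemma~\ref{lemma:ec_inff_is_fin} gives $\ec{\INFF} = \FIN$. Every finite language $L$ is generated by the grammar $(V,\emptyset,L)$, which has no selection languages at all, so such a grammar vacuously has selection in $\PREF$ and in $\SUFF$.

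\emph{Properness.} Here I rely on $\ec{\INFF} = \FIN$ from Lemma~\ref{lemma:ec_inff_is_fin}. It then suffices to exhibit infinite languages in $\ec{\PREF}$ and in $\ec{\SUFF}$:
\begin{itemize}
\item Lemma~\ref{lemma:ec_pref_setminus_suff} shows $\{a\}^*\{b\} \in \ec{\PREF}$. This language is infinite, so it is not in $\FIN = \ec{\INFF}$.
\item Lemma~\ref{lemma:ec_suff_setminus_pref} shows $\{b\}\{a\}^* \in \ec{\SUFF}$. This language is also infinite, so it is not in $\ec{\INFF}$.
\end{itemize}
Together with the inclusions above, this yields both proper inclusions.

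\emph{Main obstacle.} No step is genuinely hard; all the substantive work lives in Lemma~\ref{lemma:ec_inff_is_fin}, which we are allowed to assume. The only detail to watch is that the witnesses are indeed infinite, and this is immediate.
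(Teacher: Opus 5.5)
Your proposal is correct and takes essentially the same approach as the paper: the inclusions come from Lemma~\ref{lemma:ec_monoton} (the paper combines this with the empty-grammar observation, which you give as your alternative route), and properness comes from $\ec{\INFF} = \FIN$ together with infinite languages in $\ec{\PREF}$ and $\ec{\SUFF}$. The differences are cosmetic. You cite the infinite witnesses $\{a\}^*\{b\}$ and $\{b\}\{a\}^*$ from Lemmas~\ref{lemma:ec_pref_setminus_suff} and~\ref{lemma:ec_suff_setminus_pref} rather than those of Lemma~\ref{lemma:ec_pref_suff_setminus_pre_suf_inf}, and deriving the inclusions from $\INFF \subseteq \PREF$ and $\INFF \subseteq \SUFF$ via monotonicity is, if anything, cleaner than the paper's mixed justification.
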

\begin{proof}
	By Lemma \ref{lemma:ec_inff_is_fin}, it holds $\ec{\INFF} = \FIN$. Every finite language is generated by an external contextual grammar with an empty set of selection rules, satisfying the requirement for prefix-free or suffix-free selection languages. Thus, $\ec{\INFF} \subseteq \ec{\PREF}$ and $\ec{\INFF} \subseteq \ec{\SUFF}$ by Lemma \ref{lemma:ec_monoton}. 
	
	Since $\ec{\PREF}$ and $\ec{\SUFF}$ contain infinite languages (as shown by the witness languages, e.g. in Lemma \ref{lemma:ec_pref_suff_setminus_pre_suf_inf}), while $\ec{\INFF}$ contains only finite languages, both inclusions are proper.
\end{proof}

\begin{corollary}\label{cor:ec_inff_eq_fin}
	The language family $\ec{\INFF}$ coincides with the family $\ec{\FIN}$.
\end{corollary}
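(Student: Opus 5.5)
We must show $\ec{\INFF} = \ec{\FIN}$. By Lemma~\ref{lemma:ec_inff_is_fin} we already have $\ec{\INFF} = \FIN$, so the task reduces to proving $\ec{\FIN} = \FIN$.

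For the inclusion $\FIN \subseteq \ec{\FIN}$, we reuse the argument from Lemma~\ref{lemma:ec_inff_is_fin}. Every finite language $L$ is generated by the grammar $(V,\emptyset,L)$. Since there are no selection pairs, the condition that all selection languages lie in $\FIN$ holds vacuously. Alternatively, monotonicity (Lemma~\ref{lemma:ec_monoton}) can be combined with $\FIN \subseteq \INFF$ being false in general, so the direct construction is the cleaner route.

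For the inclusion $\ec{\FIN} \subseteq \FIN$, let $G=(V,\cS,A)$ have only finite selection languages. Every word to which a context is applied must belong to some selection language. Hence every derived word outside $A$ has the form $uxv$, where $x$ lies in the finite set $\bigcup_{(S,C)\in\cS} S$ and $(u,v)$ lies in the finite set $\bigcup_{(S,C)\in\cS} C$. Therefore
\[
L_{\mathrm{ex}}(G) \subseteq A \cup \set{uxv}{(S,C)\in\cS,\ x\in S,\ (u,v)\in C},
\]
and the right-hand side is a finite union of finite sets, so it is finite. Combining this with the first inclusion gives $\ec{\FIN}=\FIN=\ec{\INFF}$.

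No step is a real obstacle. The only point needing care is that the empty set of selection pairs is admissible for every family $\cF$, which the definition of a contextual grammar permits, since $\cS$ is merely a finite set. One could also recall that $\ec{\FIN}=\FIN$ is a known result from \cite{Dassow.2005}, but the direct argument above is short enough to include.
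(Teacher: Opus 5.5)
Your proof is correct and follows the same route as the paper: you use Lemma~\ref{lemma:ec_inff_is_fin} to reduce the claim to the identity $\ec{\FIN}=\FIN$. The only difference is that the paper cites this identity from \cite{Istrail.1978}, whereas you prove it directly by observing that every derived word is either an axiom or of the form $uxv$ with $x$ in one of finitely many finite selection languages, which makes your version self-contained (your aside about monotonicity is muddled but unused, so it does no harm).
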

\begin{proof}
	By Lemma \ref{lemma:ec_inff_is_fin}, it holds $\ec{\INFF} = \FIN$. Since $\FIN = \ec{\FIN}$ 
	(as shown in \cite{Istrail.1978}), the equality $\ec{\INFF} = \ec{\FIN}$ follows directly.
\end{proof}

With the languages from the previous lemmas, the incomparabilities depicted in Figure~\ref{fig:ec_erg} can be shown.

\begin{theorem}[Resulting hierarchy for $\cEC$]\label{theorem:neue_hierarchie_EC}
The inclusion relations presented in Figure \ref{fig:ec_erg} hold. An arrow from an entry $X$ to
an entry~$Y$ depicts the proper inclusion $X \subset Y$; if two families are not connected by a directed
path, they are incomparable.
\end{theorem}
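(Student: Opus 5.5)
Theorem~\ref{theorem:neue_hierarchie_EC} follows by assembling the inclusions and equalities already established and then checking every pair of families not joined by a directed path. \textbf{Inclusions.} The arrows labelled with references are taken from the literature. The new arrows $\ec{\FIN}=\ec{\INFF}\subset\ec{\PREF},\ec{\SUFF}$ are Corollary~\ref{cor:ec_inclusions} together with Corollary~\ref{cor:ec_inff_eq_fin}, and $\ec{\PREF},\ec{\SUFF}\subset\ec{\PS}$ is Lemma~\ref{lemma:ec_pref_suff_subset_ps}. Transitivity of proper inclusion then yields all the directed paths.

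\textbf{Incomparabilities.} The only new pairs involve $\ec{\PREF}$ and $\ec{\SUFF}$; the node $\ec{\INFF}$ coincides with $\ec{\FIN}$, whose relations are known. For each family $\ec{X}$ in the figure that is not connected to $\ec{\PREF}$ (respectively $\ec{\SUFF}$), I give a witness in each direction.

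\emph{Direction $\ec{X}\not\subseteq\ec{\PREF},\ec{\SUFF}$.} I use $\{a\}^*\in\ec{\MON}\setminus(\ec{\PREF}\cup\ec{\SUFF})$ from Lemma~\ref{lemma:ec_mon_setminus_pref_suff}. Every relevant $\ec{X}$ contains $\ec{\MON}$: this holds for $\MON$, $\NIL$, $\COMB$, $\DEF$, $\ORD$, $\NC$, $\SYDEF$, $\INF$, $\PRE$, $\SUF$, $\COMM$, $\CIRC$, $\STAR$, and by Lemma~\ref{lemma:ec_monoton} it suffices that $\MON\subseteq X$. Between $\ec{\PREF}$ and $\ec{\SUFF}$ themselves, Lemmas~\ref{lemma:ec_pref_setminus_suff} and~\ref{lemma:ec_suff_setminus_pref} settle both directions.

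\emph{Direction $\ec{\PREF},\ec{\SUFF}\not\subseteq\ec{X}$.} Here I pick the largest $X$ in each incomparable chain:
\begin{itemize}
\item $\ec{\NC}$, via Lemma~\ref{lemma:ec_pref_suff_setminus_nc}; this covers $\MON$, $\NIL$, $\COMB$, $\DEF$, $\ORD$, $\NC$;
\item $\ec{\PRE}$ and $\ec{\SUF}$, hence $\ec{\INF}$, via Lemma~\ref{lemma:ec_pref_suff_setminus_pre_suf_inf};
\item $\ec{\CIRC}$, hence $\ec{\COMM}$, via Lemma~\ref{lemma:ec_pref_suff_not_in_circ_comm};
\item $\ec{\STAR}$, via Lemma~\ref{lemma:ec_pref_not_in_star};
\item $\ec{\SYDEF}$, via Lemmas~\ref{lemma:ec_pref_not_in_sydef} and~\ref{lemma:ec_suff_not_in_sydef}.
\end{itemize}
The families $\ec{\PS}$ and $\ec{\REG}$ lie above both, so they need no witness.

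The step I expect to be the main obstacle is making sure this case list is exhaustive against the figure. In particular, for $\ec{\NIL}$ and $\ec{\COMM}$ I must check that the chosen witness is still excluded. Since $\NIL\subseteq\DEF\subseteq\NC$ and $\COMM\subseteq\CIRC$, Lemma~\ref{lemma:ec_monoton} gives the exclusions. The remaining incomparabilities among the older families are inherited from the cited papers.
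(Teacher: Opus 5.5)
Your proposal is correct and takes the same route as the paper. The paper's proof just assembles Corollaries~\ref{cor:ec_inclusions} and~\ref{cor:ec_inff_eq_fin}, Lemma~\ref{lemma:ec_pref_suff_subset_ps} and the witness languages of the preceding lemmas, and your case list makes that explicit and is exhaustive.

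One justification needs correcting. $\MON\not\subseteq\COMB$, because no language $V^*X$ with $X\subseteq V$ contains $\lambda$, so you cannot get $\{a\}^*\in\ec{\COMB}$ from Lemma~\ref{lemma:ec_monoton}. Use instead the cited inclusion $\ec{\MON}\subset\ec{\COMB}$ from the figure, or directly the grammar $(\{a\},\{\{a\}^*\{a\}\to(a,\lambda)\},\{\lambda,a\})$.
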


\section{Conclusion and future work}

In this paper, we have continued the investigation of subregular language families and their application 
as selection languages in contextual grammars. We have introduced the families of prefix-free, suffix-free, 
and infix-free languages (idefix-free languages) and established their exact positions within the existing 
hierarchy of subregular language families. Furthermore, we have comprehensively examined the generative 
capacity of external contextual grammars regulated by these new families, extending the known inclusion diagrams. 

For future work, it remains an ongoing effort to complete the hierarchies of subregular language families and the corresponding families of externally and internally generated contextual languages. As previously noted, the extension of the hierarchy with various other families of definite-like languages (for instance, ultimate definite, central definite, and non-initial definite languages) is currently under investigation. Furthermore, it is planned to unify the extended hierarchy of subregular language families with the hierarchies of language families generated by contextual grammars defined by limited resources (e.\,g., the number of contexts, the number of selection rules, or the size of the contexts). Finally, applying these specific subregular restrictions to other formal frameworks, such as tree-controlled grammars or networks of evolutionary processors, presents an intriguing direction for upcoming studies. In addition, answering various decidability and complexity questions regarding the families investigated in this work is high on our research agenda.


\begin{thebibliography}{10}
\providecommand{\bibitemdeclare}[2]{}
\providecommand{\surnamestart}{}
\providecommand{\surnameend}{}
\providecommand{\urlprefix}{Available at }
\providecommand{\url}[1]{\texttt{#1}}
\providecommand{\href}[2]{\texttt{#2}}
\providecommand{\urlalt}[2]{\href{#1}{#2}}
\providecommand{\doi}[1]{doi:\urlalt{https://doi.org/#1}{#1}}
\providecommand{\eprint}[1]{arXiv:\urlalt{https://arxiv.org/abs/#1}{#1}}
\providecommand{\bibinfo}[2]{#2}

\bibitemdeclare{article}{Bordihn_Holzer_Kutrib.2009}
\bibitem{Bordihn_Holzer_Kutrib.2009}
\bibinfo{author}{Henning \surnamestart Bordihn\surnameend},
  \bibinfo{author}{Markus \surnamestart Holzer\surnameend} \&
  \bibinfo{author}{Martin \surnamestart Kutrib\surnameend}
  (\bibinfo{year}{2009}): \emph{\bibinfo{title}{Determination of finite
  automata accepting subregular languages}}.
\newblock {\slshape \bibinfo{journal}{Theoretical Computer Science}}
  \bibinfo{volume}{410}(\bibinfo{number}{35}), pp. \bibinfo{pages}{3209--3222},
  \doi{10.1016/j.tcs.2009.05.019}.

\bibitemdeclare{phdthesis}{Brzozowski.1962}
\bibitem{Brzozowski.1962}
\bibinfo{author}{Janusz~A. \surnamestart Brzozowski\surnameend}
  (\bibinfo{year}{1962}): \emph{\bibinfo{title}{Regular expression techniques
  for sequential circuits}}.
\newblock Ph.D. thesis, \bibinfo{school}{Princeton University, Princeton, NJ,
  USA}.

\bibitemdeclare{article}{Brzozowski.1967}
\bibitem{Brzozowski.1967}
\bibinfo{author}{Janusz~A. \surnamestart Brzozowski\surnameend}
  (\bibinfo{year}{1967}): \emph{\bibinfo{title}{Roots of star events}}.
\newblock {\slshape \bibinfo{journal}{Journal of the ACM}}
  \bibinfo{volume}{14}(\bibinfo{number}{3}), pp. \bibinfo{pages}{466--477},
  \doi{10.1109/SWAT.1966.21}.

\bibitemdeclare{article}{Brzozowski_Cohen.1969}
\bibitem{Brzozowski_Cohen.1969}
\bibinfo{author}{Janusz~A. \surnamestart Brzozowski\surnameend} \&
  \bibinfo{author}{Rina \surnamestart Cohen\surnameend} (\bibinfo{year}{1969}):
  \emph{\bibinfo{title}{On decompositions of regular events}}.
\newblock {\slshape \bibinfo{journal}{Journal of the ACM}}
  \bibinfo{volume}{16}(\bibinfo{number}{1}), pp. \bibinfo{pages}{132--144},
  \doi{10.1145/321495.321505}.

\bibitemdeclare{article}{Brzozowski_Jiraskova_Zou.2014}
\bibitem{Brzozowski_Jiraskova_Zou.2014}
\bibinfo{author}{Janusz~A. \surnamestart Brzozowski\surnameend},
  \bibinfo{author}{Galina \surnamestart Jir\'askov\'a\surnameend} \&
  \bibinfo{author}{Chenglong \surnamestart Zou\surnameend}
  (\bibinfo{year}{2014}): \emph{\bibinfo{title}{Quotient complexity of closed
  languages}}.
\newblock {\slshape \bibinfo{journal}{Theory of Computing Systems}}
  \bibinfo{volume}{54}, pp. \bibinfo{pages}{277--292},
  \doi{10.1007/s00224-013-9515-7}.

\bibitemdeclare{article}{Dassow.2005}
\bibitem{Dassow.2005}
\bibinfo{author}{J{\"{u}}rgen \surnamestart Dassow\surnameend}
  (\bibinfo{year}{2005}): \emph{\bibinfo{title}{Contextual grammars with
  subregular choice}}.
\newblock {\slshape \bibinfo{journal}{Fundamenta Informaticae}}
  \bibinfo{volume}{64}(\bibinfo{number}{1--4}), pp. \bibinfo{pages}{109--118}.

\bibitemdeclare{article}{Dassow.2015}
\bibitem{Dassow.2015}
\bibinfo{author}{J{\"u}rgen \surnamestart Dassow\surnameend}
  (\bibinfo{year}{2015}): \emph{\bibinfo{title}{Contextual languages with
  strictly locally testable and star free selection languages}}.
\newblock {\slshape \bibinfo{journal}{Analele Universitatii Bucuresti}}
  \bibinfo{volume}{62}, pp. \bibinfo{pages}{25--36}.

\bibitemdeclare{article}{Dassow_Manea_Truthe.2012}
\bibitem{Dassow_Manea_Truthe.2012}
\bibinfo{author}{J{\"{u}}rgen \surnamestart Dassow\surnameend},
  \bibinfo{author}{Florin \surnamestart Manea\surnameend} \&
  \bibinfo{author}{Bianca \surnamestart Truthe\surnameend}
  (\bibinfo{year}{2012}): \emph{\bibinfo{title}{On external contextual grammars
  with subregular selection languages}}.
\newblock {\slshape \bibinfo{journal}{Theoretical Computer Science}}
  \bibinfo{volume}{449}, pp. \bibinfo{pages}{64--73},
  \doi{10.1016/j.tcs.2012.04.008}.

\bibitemdeclare{article}{DasManTru12b}
\bibitem{DasManTru12b}
\bibinfo{author}{J{\"u}rgen \surnamestart Dassow\surnameend},
  \bibinfo{author}{Florin \surnamestart Manea\surnameend} \&
  \bibinfo{author}{Bianca \surnamestart Truthe\surnameend}
  (\bibinfo{year}{2012}): \emph{\bibinfo{title}{On Subregular Selection
  Languages in Internal Contextual Grammars}}.
\newblock {\slshape \bibinfo{journal}{Journal of Automata, Languages, and
  Combinatorics}} \bibinfo{volume}{17}(\bibinfo{number}{2--4}), pp.
  \bibinfo{pages}{145--164}, \doi{10.25596/jalc-2012-145}.

\bibitemdeclare{article}{Dassow_Truthe.2023}
\bibitem{Dassow_Truthe.2023}
\bibinfo{author}{J{\"{u}}rgen \surnamestart Dassow\surnameend} \&
  \bibinfo{author}{Bianca \surnamestart Truthe\surnameend}
  (\bibinfo{year}{2023}): \emph{\bibinfo{title}{Relations of contextual
  grammars with strictly locally testable selection languages}}.
\newblock {\slshape \bibinfo{journal}{{RAIRO} -- Theoretical Informatics and
  Applications}} \bibinfo{volume}{57}, p. \bibinfo{pages}{\#10},
  \doi{10.1051/ita/2023012}.

\bibitemdeclare{book}{Gecseg_Peak.1972}
\bibitem{Gecseg_Peak.1972}
\bibinfo{author}{Ference \surnamestart G\'ecseg\surnameend} \&
  \bibinfo{author}{Istv{\'{a}}n \surnamestart Pe{\'{a}}k\surnameend}
  (\bibinfo{year}{1972}): \emph{\bibinfo{title}{Algebraic Theory of Automata}}.
\newblock \bibinfo{publisher}{Academiai Kiado, Budapest}.

\bibitemdeclare{article}{Gill_Kou.1974}
\bibitem{Gill_Kou.1974}
\bibinfo{author}{Arthur \surnamestart Gill\surnameend} \&
  \bibinfo{author}{Lawrence~T. \surnamestart Kou\surnameend}
  (\bibinfo{year}{1974}): \emph{\bibinfo{title}{Multiple-entry finite
  automata}}.
\newblock {\slshape \bibinfo{journal}{Journal of Computer and System Sciences}}
  \bibinfo{volume}{9}(\bibinfo{number}{1}), pp. \bibinfo{pages}{1--19},
  \doi{10.1016/S0022-0000(74)80034-6}.

\bibitemdeclare{article}{Havel.1969}
\bibitem{Havel.1969}
\bibinfo{author}{Ivan~M. \surnamestart Havel\surnameend}
  (\bibinfo{year}{1969}): \emph{\bibinfo{title}{The theory of regular events
  {II}}}.
\newblock {\slshape \bibinfo{journal}{Kybernetika}}
  \bibinfo{volume}{5}(\bibinfo{number}{6}), pp. \bibinfo{pages}{520--544}.

\bibitemdeclare{inproceedings}{Holzer_Truthe.2015}
\bibitem{Holzer_Truthe.2015}
\bibinfo{author}{Markus \surnamestart Holzer\surnameend} \&
  \bibinfo{author}{Bianca \surnamestart Truthe\surnameend}
  (\bibinfo{year}{2015}): \emph{\bibinfo{title}{On relations between some
  subregular language families}}.
\newblock In \bibinfo{editor}{Rudolf \surnamestart Freund\surnameend},
  \bibinfo{editor}{Markus \surnamestart Holzer\surnameend},
  \bibinfo{editor}{Nelma \surnamestart Moreira\surnameend} \&
  \bibinfo{editor}{Rog{\'{e}}rio \surnamestart Reis\surnameend}, editors:
  {\slshape \bibinfo{booktitle}{Seventh Workshop on Non-Classical Models of
  Automata and Applications -- \hbox{NCMA} 2015, Porto, Portugal, August 31 --
  September 1, 2015. Proceedings}}, {\slshape \bibinfo{series}{books@ocg.at}}
  \bibinfo{volume}{318}, \bibinfo{publisher}{{\"{O}}sterreichische Computer
  Gesellschaft}, pp. \bibinfo{pages}{109--124}.

\bibitemdeclare{article}{Istrail.1978}
\bibitem{Istrail.1978}
\bibinfo{author}{Sorin \surnamestart Istrail\surnameend}
  (\bibinfo{year}{1978}): \emph{\bibinfo{title}{Gramatici contextuale cu
  selectiva regulata}}.
\newblock {\slshape \bibinfo{journal}{Stud. Cerc. Mat}} \bibinfo{volume}{30},
  pp. \bibinfo{pages}{287--294}.

\bibitemdeclare{inproceedings}{Koedding.Truthe.2024}
\bibitem{Koedding.Truthe.2024}
\bibinfo{author}{Marvin \surnamestart K{\"{o}}dding\surnameend} \&
  \bibinfo{author}{Bianca \surnamestart Truthe\surnameend}
  (\bibinfo{year}{2024}): \emph{\bibinfo{title}{Various Types of Comet
  Languages and their Application in External Contextual Grammars}}.
\newblock In \bibinfo{editor}{Florin \surnamestart Manea\surnameend} \&
  \bibinfo{editor}{Giovanni \surnamestart Pighizzini\surnameend}, editors:
  {\slshape \bibinfo{booktitle}{Proceedings 14th International Workshop on
  Non-Classical Models of Automata and Applications {(NCMA} 2024), {NCMA} 2024,
  G{\"{o}}ttingen, Germany, 12--13 August 2024}}, {\slshape
  \bibinfo{series}{{EPTCS}}} \bibinfo{volume}{407}, pp.
  \bibinfo{pages}{118--135}, \doi{10.4204/EPTCS.407.9}.

\bibitemdeclare{article}{Koedding.Truthe.2025.idefix}
\bibitem{Koedding.Truthe.2025.idefix}
\bibinfo{author}{Marvin \surnamestart K{\"{o}}dding\surnameend} \&
  \bibinfo{author}{Bianca \surnamestart Truthe\surnameend}
  (\bibinfo{year}{\noop{3001}submitted}): \emph{\bibinfo{title}{Idefix-Closed
  Languages and Their Application in Contextual Grammars}}.
\newblock {\slshape \bibinfo{journal}{{RAIRO} -- Theoretical Informatics and
  Applications}}.

\bibitemdeclare{article}{Koedding.Truthe.NCMA24.JALC.2025}
\bibitem{Koedding.Truthe.NCMA24.JALC.2025}
\bibinfo{author}{Marvin \surnamestart K{\"{o}}dding\surnameend} \&
  \bibinfo{author}{Bianca \surnamestart Truthe\surnameend}
  (\bibinfo{year}{\noop{3001}submitted}): \emph{\bibinfo{title}{Various Types
  of Comet Languages and Their Application in Contextual Grammars}}.
\newblock {\slshape \bibinfo{journal}{Journal of Automata, Languages, and
  Combinatorics}}.

\bibitemdeclare{incollection}{Kudlek.2004}
\bibitem{Kudlek.2004}
\bibinfo{author}{Manfred \surnamestart Kudlek\surnameend}
  (\bibinfo{year}{2004}): \emph{\bibinfo{title}{On languages of cyclic words}}.
\newblock In \bibinfo{editor}{Natasha \surnamestart Jonoska\surnameend},
  \bibinfo{editor}{{\relax Gh}eorghe \surnamestart P\u{a}un\surnameend} \&
  \bibinfo{editor}{Grzegorz \surnamestart Rozenberg\surnameend}, editors:
  {\slshape \bibinfo{booktitle}{Aspects of Molecular Computing, Essays
  Dedicated to Tom Head on the Occasion of His 70th Birthday}}, {\slshape
  \bibinfo{series}{LNCS}} \bibinfo{volume}{2950},
  \bibinfo{publisher}{Springer-Verlag}, pp. \bibinfo{pages}{278--288},
  \doi{10.1007/978-3-540-24635-0_20}.

\bibitemdeclare{article}{Marcus.1969}
\bibitem{Marcus.1969}
\bibinfo{author}{Solomon \surnamestart Marcus\surnameend}
  (\bibinfo{year}{1969}): \emph{\bibinfo{title}{Contextual grammars}}.
\newblock {\slshape \bibinfo{journal}{Revue Roumaine de Math{\'{e}}matique
  Pures et Appliqu{\'{e}}es}} \bibinfo{volume}{14}, pp.
  \bibinfo{pages}{1525--1534}.

\bibitemdeclare{book}{McNaughton_Papert.1971}
\bibitem{McNaughton_Papert.1971}
\bibinfo{author}{Robert \surnamestart McNaughton\surnameend} \&
  \bibinfo{author}{Seymour \surnamestart Papert\surnameend}
  (\bibinfo{year}{1971}): \emph{\bibinfo{title}{Counter-Free Automata}}.
\newblock \bibinfo{publisher}{MIT Press}, \bibinfo{address}{Cambridge, USA}.

\bibitemdeclare{inproceedings}{Nagy.2019}
\bibitem{Nagy.2019}
\bibinfo{author}{Benedek \surnamestart Nagy\surnameend} (\bibinfo{year}{2019}):
  \emph{\bibinfo{title}{Union-freeness, deterministic union-freeness and
  union-complexity}}.
\newblock In \bibinfo{editor}{Michal \surnamestart Hospod{\'a}r\surnameend},
  \bibinfo{editor}{Galina \surnamestart Jir{\'a}skov{\'a}\surnameend} \&
  \bibinfo{editor}{Stavros \surnamestart Konstantinidis\surnameend}, editors:
  {\slshape \bibinfo{booktitle}{Descriptional Complexity of Formal Systems,
  21st IFIP WG 1.02 International Conference, DCFS 2019, Ko{\v{s}}ice,
  Slovakia, July 17--19, 2019, Proceedings}}, \bibinfo{publisher}{Springer,
  Cham}, pp. \bibinfo{pages}{46--56}, \doi{10.1007/978-3-030-23247-4_3}.

\bibitemdeclare{article}{Olejar_Szabari.2023}
\bibitem{Olejar_Szabari.2023}
\bibinfo{author}{Viktor \surnamestart Olej{\'{a}}r\surnameend} \&
  \bibinfo{author}{Alexander \surnamestart Szabari\surnameend}
  (\bibinfo{year}{2025}): \emph{\bibinfo{title}{Closure Properties of
  Subregular Languages Under Operations}}.
\newblock {\slshape \bibinfo{journal}{Int. J. Found. Comput. Sci.}}
  \bibinfo{volume}{36}(\bibinfo{number}{7}), pp. \bibinfo{pages}{1063--1087},
  \doi{10.1142/S0129054123450016}.

\bibitemdeclare{article}{Paz_Peleg.1965}
\bibitem{Paz_Peleg.1965}
\bibinfo{author}{Azaria \surnamestart Paz\surnameend} \&
  \bibinfo{author}{Bezalel \surnamestart Peleg\surnameend}
  (\bibinfo{year}{1965}): \emph{\bibinfo{title}{Ultimate-definite and
  symmetric-definite events and automata}}.
\newblock {\slshape \bibinfo{journal}{Journal of the ACM}}
  \bibinfo{volume}{12}(\bibinfo{number}{3}), pp. \bibinfo{pages}{399--410},
  \doi{10.1145/321281.321292}.

\bibitemdeclare{article}{Perles_Rabin_Shamir.1963}
\bibitem{Perles_Rabin_Shamir.1963}
\bibinfo{author}{Micha~A. \surnamestart Perles\surnameend},
  \bibinfo{author}{Michael~O. \surnamestart Rabin\surnameend} \&
  \bibinfo{author}{Eli \surnamestart Shamir\surnameend} (\bibinfo{year}{1963}):
  \emph{\bibinfo{title}{The theory of definite automata}}.
\newblock {\slshape \bibinfo{journal}{IEEE Transactions of Electronic
  Computers}} \bibinfo{volume}{12}, pp. \bibinfo{pages}{233--243},
  \doi{10.1109/PGEC.1963.263534}.

\bibitemdeclare{book}{Rozenberg_Salomaa.1997}
\bibitem{Rozenberg_Salomaa.1997}
\bibinfo{editor}{Grzegorz \surnamestart Rozenberg\surnameend} \&
  \bibinfo{editor}{Arto \surnamestart Salomaa\surnameend}, editors
  (\bibinfo{year}{1997}): \emph{\bibinfo{title}{Handbook of Formal Languages}}.
\newblock \bibinfo{publisher}{Springer-Verlag}, \bibinfo{address}{Berlin},
  \doi{10.1007/978-3-642-59136-5}.

\bibitemdeclare{book}{Shyr.1991}
\bibitem{Shyr.1991}
\bibinfo{author}{Huei{-}Jan \surnamestart Shyr\surnameend}
  (\bibinfo{year}{1991}): \emph{\bibinfo{title}{Free Monoids and Languages}}.
\newblock \bibinfo{publisher}{Hon Min Book Co., Taichung, Taiwan}.

\bibitemdeclare{article}{Shyr_Thierrin.1974.ord}
\bibitem{Shyr_Thierrin.1974.ord}
\bibinfo{author}{Huei{-}Jan \surnamestart Shyr\surnameend} \&
  \bibinfo{author}{Gabriel \surnamestart Thierrin\surnameend}
  (\bibinfo{year}{1974}): \emph{\bibinfo{title}{Ordered automata and associated
  languages}}.
\newblock {\slshape \bibinfo{journal}{Tamkang Journal of Mathematics}}
  \bibinfo{volume}{5}(\bibinfo{number}{1}), pp. \bibinfo{pages}{9--20}.

\bibitemdeclare{article}{Shyr_Thierrin.1974.ps}
\bibitem{Shyr_Thierrin.1974.ps}
\bibinfo{author}{Huei{-}Jan \surnamestart Shyr\surnameend} \&
  \bibinfo{author}{Gabriel \surnamestart Thierrin\surnameend}
  (\bibinfo{year}{1974}): \emph{\bibinfo{title}{Power-separating regular
  languages}}.
\newblock {\slshape \bibinfo{journal}{Mathematical Systems Theory}}
  \bibinfo{volume}{8}(\bibinfo{number}{1}), pp. \bibinfo{pages}{90--95},
  \doi{10.1007/BF01761710}.

\bibitemdeclare{inproceedings}{Truthe.2014}
\bibitem{Truthe.2014}
\bibinfo{author}{Bianca \surnamestart Truthe\surnameend}
  (\bibinfo{year}{2014}): \emph{\bibinfo{title}{A relation between definite and
  ordered finite automata}}.
\newblock In \bibinfo{editor}{Suna \surnamestart Bensch\surnameend},
  \bibinfo{editor}{Rudolf \surnamestart Freund\surnameend} \&
  \bibinfo{editor}{Friedrich \surnamestart Otto\surnameend}, editors: {\slshape
  \bibinfo{booktitle}{Sixth Workshop on Non-Classical Models for Automata and
  Applications -- {NCMA} 2014, Kassel, Germany, July 28--29, 2014.
  Proceedings}}, {\slshape \bibinfo{series}{books@ocg.at}}
  \bibinfo{volume}{304}, \bibinfo{publisher}{{\"{O}}sterreichische Computer
  Gesellschaft}, pp. \bibinfo{pages}{235--247}.

\bibitemdeclare{techreport}{Truthe.2018}
\bibitem{Truthe.2018}
\bibinfo{author}{Bianca \surnamestart Truthe\surnameend}
  (\bibinfo{year}{2018}): \emph{\bibinfo{title}{Hierarchy of Subregular
  Language Families}}.
\newblock \bibinfo{type}{Technical Report},
  \bibinfo{institution}{Justus-Liebig-Universit{\"{a}}t Giessen, Institut
  f{\"{u}}r Informatik, IFIG Research Report 1801}.

\bibitemdeclare{article}{Truthe.2021}
\bibitem{Truthe.2021}
\bibinfo{author}{Bianca \surnamestart Truthe\surnameend}
  (\bibinfo{year}{2021}): \emph{\bibinfo{title}{Generative capacity of
  contextual grammars with subregular selection languages}}.
\newblock {\slshape \bibinfo{journal}{Fundamenta Informaticae}}
  \bibinfo{volume}{180}(\bibinfo{number}{1--2}), pp. \bibinfo{pages}{123--150},
  \doi{10.3233/FI-2021-2037}.

\bibitemdeclare{book}{Wiedemann.1978}
\bibitem{Wiedemann.1978}
\bibinfo{author}{Barbara \surnamestart Wiedemann\surnameend}
  (\bibinfo{year}{1978}): \emph{\bibinfo{title}{Vergleich der
  {L}eistungsf{\"a}higkeit endlicher determinierter {A}utomaten}}.
\newblock \bibinfo{publisher}{Diplomarbeit, Universit{\"a}t Rostock}.

\end{thebibliography}
 \newcommand{\noop}[1]{}

\end{document}